\newcommand{\blind}{1}
\newcommand{\nop}[1]{}
\newtheorem{theorem}{Theorem}[section]
\newtheorem{proposition}{Proposition}[section]
\newtheorem{lemma}{Lemma}[section]
\newtheorem{remark}{Remark}[section]
\newtheorem{definition}{Definition}[section]
\begin{document}

\def\spacingset#1{\renewcommand{\baselinestretch}%
{#1}\small\normalsize} \spacingset{1}


\if1\blind
{
  \title{\bf Community Detection by $L_0$-penalized Graph Laplacian}
  \author{Chong Chen\\
    School of Mathematical Science , Peking University\\
    and \\
    Ruibin Xi \thanks{
    Correspondence should be addressed to. Email: ruibinxi@math.pku.edu.cn. The authors gratefully acknowledge \textit{the National Natural Science Foundation of China (11471022, 71532001) and the Recruitment Program of Global Youth Experts of China}}\hspace{.2cm}\\
    School of Mathematical Science , Peking University\\
    and\\
    Nan Lin \\
    Department of Mathematics, Washington University in St. Louis}
  \maketitle
} \fi

\if0\blind
{
  \bigskip
  \bigskip
  \bigskip
  \begin{center}
    {\LARGE\bf Community Detection by $L_0$-penalized Graph Laplacian}
\end{center}
  \medskip
} \fi

\bigskip
\begin{abstract}

Community detection in network analysis aims at partitioning nodes in a network into $K$ disjoint communities. Most currently available algorithms assume that $K$ is known, but choosing a correct $K$ is generally very difficult for real networks. In addition, many real networks contain outlier nodes not belonging to any community, but currently very few algorithm can handle networks with outliers. In this paper, we propose a novel model free tightness criterion and an efficient algorithm to maximize this criterion for community detection. This tightness criterion is closely related with the graph Laplacian with $L_0$ penalty. Unlike most community detection methods, our method does not require a known $K$ and can properly detect communities in networks with outliers.

Both theoretical and numerical properties of the method are analyzed. The theoretical result guarantees that, under the degree corrected stochastic block model, even for networks with outliers, the maximizer of the tightness criterion can extract communities with small misclassification rates even when the number of communities grows to infinity as the network size grows. Simulation study shows that the proposed method can recover true communities more accurately than other methods. Applications to a college football data and a yeast protein-protein interaction data also reveal that the proposed method performs significantly better.

\end{abstract}

\noindent%
{\it Keywords:} consistency; degree corrected stochastic block model; spectral clustering; outlier; social network; gene regulatory network
\vfill

\newpage
\spacingset{1.45} 
\section{Introduction}
\label{sec:intro}

Community detection has attracted tremendous research attention, initially in the physics and computer science community \citep{newman2004coauthorship,newman2004finding,newman2006modularity} and more recently in the statistics community \citep{bickel2009nonparametric,bickel2013asymptotic,zhao2012consistency,jin2015fast}.
Considering an undirected network $G=(V,E)$, where $V$ is the set of nodes and $E$ is the set of edges between nodes. Community detection is to find an ``optimal" partition of the nodes $V=G_1\bigcup \cdots \bigcup G_K$ such that nodes within the communities $G_k$ ($k=1,\cdots,K$) are more closely connected than nodes between the communities.

One class of community detection algorithms detects community by optimizing a heuristic global criterion over all possible partitions of the nodes \citep{wei1989towards,shi2000normalized}.
For example, the criterion modularity \citep{newman2004finding} has been very popular in community detection and fast algorithms for maximizing modularity \citep{newman2004fast} have been developed and widely used. The well-known spectral clustering algorithms \citep{jin2015fast,balakrishnan2011noise,chaudhuri2012spectral,
rohe2011spectral,joseph2016impact} can be traced back as continuous approximation methods of global criterion such as ratio cut \citep{hagen1992new} or modularity \citep{white2005spectral}. Spectral clustering methods are fast in computation and easy to implement since they usually only require calculation of a few eigenvectors of the Laplacian matrix.

Probabilistic model-based methods are another class of community detection algorithms. They detect communities by fitting a probabilistic model \citep{bickel2013asymptotic,nowicki2001estimation,mariadassou2010uncovering,decelle2011asymptotic} or by optimizing a criterion derived from a probabilistic model \citep{bickel2009nonparametric,karrer2011stochastic}. One of the most commonly used models is the stochastic block model (SBM) \citep{holland1983stochastic}. Given the adjacency matrix $A=(A_{ij})_{1\leq i,j\leq n}$ of a network $G$ with $n$ nodes, the SBM assumes that true node labels $c_i$ are independently sampled from a multinomial distribution with parameters $\mathbf{\pi}=\left(\pi_{1},...,\pi_{K}\right)^{T}$, i.e. $\pi_k = P(c_i=k),~k=1,\cdots, K$. Conditional on the community labels, the edges $A_{ij}$ ($i<j$) are independent Bernoulli random variables with $P(A_{ij}=1|c_i,c_j)=p_{c_{i}c_{j}}$. The connection probabilities should have $p^{-}=\min_k\{p_{kk}\}>q^{+}=\max_{k\neq m}\{p_{km}\}$. The SBM assumes that the expected degrees are the same for all nodes and does not allow hubs in networks. To remove this constraint, the degree corrected stochastic block model (DCSBM) \citep{karrer2011stochastic} introduces a degree correction variable $\theta_i$ to each node such that $P(A_{ij}=1|c_i,c_j,\theta_i,\theta_j)=\theta_i\theta_j p_{c_{i}c_{j}}$, where $\theta_i>0$ and $E(\theta_i)=1$.
\nop{Other models include the mixed membership models \citep{newman2007mixture,airoldi2008mixed} and latent models \citep{bickel2009nonparametric,hoff2002latent,karrer2011stochastic}.}
\medskip

Consistency results were developed for a number of community detection algorithms, mostly based on the SBM or DCSBM. Under the assumption that the community number is fixed, Bickel and Chen \citep{bickel2009nonparametric} laid out a general theory under the SBM for checking consistency of community detection criteria when the network size grows to infinity, and similar theories were also developed for DCSBM \citep{zhao2012consistency,jin2015fast}. With a fixed community number, the community size would linearly grow as the number of nodes grows. However, this is not a realistic assumption, because real networks often have tight communities at small scales, even when networks contain millions of nodes \citep{leskovec2008statistical}. Recent researches \citep{rohe2011spectral,choi2012stochastic,cai2015robust} generalized these consistency results by allowing the number of communities grows as the node number grows for the SBM. However, as far as we know, similar results for the DCSBM are not available yet.
\medskip

Despite all these progresses in community detection, most of the current algorithms assume that the number of communities $K$ is known in priori. This is problematic because the number of communities is usually unknown in real applications. If the network is small, one may try a few different community numbers $K$ and choose the ``best" $K$ as the ``true" community number. When the network is large, it would be very difficult to perform such a search. Although there are a few algorithms \citep{newman2004finding,zhao2011community} not requiring a known $K$, consistency results for these algorithms were either not developed or only developed in very simple cases (e.g. for SBMs with $K=2$). In addition, real networks often contain outlier nodes that cannot be grouped into any communities \citep{kumar2010structure,wang2015community,khorasgani2010top}. Currently, there is not much work for community detection in networks with outliers. A few algorithms were developed \citep{zhao2011community,lancichinetti2011finding}, but their performance is limited and, as far as we know, no consistency result is available for these algorithms yet.

\medskip

In this paper, we propose a novel model-free tightness criterion for community detection. Community detection based on this criterion iteratively extracts single communities and no prior knowledge about $K$ is needed. A permutation-based test is performed to filter the extracted communities that are likely to be outliers or false communities. Under the DCSBM, we establish asymptotic consistency allowing the community number $K$ increases as the number of nodes grows. We further extend this consistency result for DCSBMs with outliers. We show that maximizing this criterion is equivalent to maximizing a penalized graph Laplacian with constraints. An efficient algorithm is developed based on the alternating direction method of multiplier (ADMM) to maximize this penalized Laplacian. This paper is organized as follows. The model-free criterion and the ADMM algorithm are described in Section \ref{sec:meth}. Theoretical results are given in Section\ref{sec:theory}. Section \ref{sec:simulation} presents simulation comparison with existing methods and Section \ref{sec:realData} is the real data analysis. Proofs of the theorems are given in the Appendix.

\section{Method and Algorithm}
\label{sec:meth}
Assume that nodes of a graph $G=(V,E)$ are indexed by $\{1,2,...,n\}$ and each node $i$ belongs to exactly one of $K$ non-overlapping communities denoted by a latent label $c_{i}\in\{1,...,K\}$. If there is no outlier, nodes within communities are more tightly connected than nodes between communities. For networks with outliers, we assume that the $K$th community is the outlier ``community", in which nodes are randomly connected with other nodes in the network. Exact definition of outliers is given in Section \ref{sec:theory}. Given a set $S\subset V$, the complementary set of $S$ is denoted by $\bar{S}$ and the number of elements in $S$ is denoted as $|S|$. Define $W(S)=\sum_{i,j\in S}A_{ij}$, $B(S)=\sum_{i\in S,j\in \bar{S}}A_{ij}$ and $V(S) = W(S)+B(S)$. Then, $W(S)$ is twice the number of edges between nodes in $S$, $B(S)$ is the total number of edges between $S$ and $\bar{S}$ and $V(S)$ is the total degrees in $S$. Given a vector $\mathbf{u}$, we denote $\|\mathbf{u}\|_{0}$ as the number of nonzero elements in $\mathbf{u}$ and $\|\mathbf{u}\|_{2}$ as the $L_2$-norm of the vector $\mathbf{u}$.

\subsection{A tightness criterion}

Given a set $S\subset V$, if it is a true community, we expect that most of its connections are within $S$ itself and thus $W(S)/V(S)$ should be large. However, directly maximizing $W(S)/V(S)$ has a trivial solution $S=V$. We instead introduce a penalty to the size of the community and consider the following tightness criterion,
\begin{equation}
\label{Objective Function}
\psi(S)=\frac{W(S)}{V(S)}-\eta \left|S\right|,
\end{equation}
where $\eta$ is a tuning parameter. In Section 3, we will show that with a proper choice of $\eta$, maximizing this tightness criterion can render consistency in community detection. This tightness criterion is closely related to a penalized graph Laplacian. More specifically, let $Q=D^{-1/2}AD^{-1/2}$ be the graph Laplacian, where $A$ is the adjacency matrix and $D=\mbox{diag}\{d_1,\cdots,d_n\}$ is the nodal degree matrix with $d_i$ being the degree of the $i$th node. Then, we have the following proposition.

\begin{proposition}
	\label{thm:3.1}
	Given a set $S\subset V$, define its membership vector by
	\begin{equation}\label{communityTovector}
	\mathbf{u}_S(i)=\begin{cases}
	\frac{\sqrt{d_{i}}}{\sqrt{W(S)+B(S)}}, & \mbox{ if }\ \ i\in S, \\ 0, & \mbox{ if }\ \ i\in \bar{S}.
	\end{cases}
	\end{equation}
	
	Then we have $\psi(S)=\mathbf{u}_S^{t}Q\mathbf{u}_S - \eta\|\mathbf{u}_S\|_0$ and $\|\mathbf{u}_S\|_{2}=1$.
\end{proposition}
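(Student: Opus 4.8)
The plan is a direct computation verifying each of the three claimed identities in turn, using only the definitions of $W(S)$, $B(S)$, $V(S)$, $Q = D^{-1/2}AD^{-1/2}$, and the membership vector $\mathbf{u}_S$ from \eqref{communityTovector}.

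First I would establish $\|\mathbf{u}_S\|_{2}=1$. By definition, $\|\mathbf{u}_S\|_{2}^{2} = \sum_{i\in S} d_i/(W(S)+B(S))$, so the only step needed is the bookkeeping identity $\sum_{i\in S} d_i = \sum_{i\in S}\sum_{j\in V} A_{ij}$, which upon splitting the inner sum according to $j\in S$ versus $j\in\bar S$ equals $W(S)+B(S)$; hence the ratio is $1$. For the $L_0$ term, note that the nonzero coordinates of $\mathbf{u}_S$ are exactly the $i\in S$ with $d_i>0$. Under the standing assumption that every node has positive degree (isolated nodes carry no edges and so affect neither $W$, $B$, nor $\psi$), this is all of $S$, giving $\|\mathbf{u}_S\|_{0}=|S|$ and $\eta\|\mathbf{u}_S\|_{0}=\eta|S|$, which matches the second term of $\psi(S)$.

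Finally, for the quadratic form I would write $\mathbf{u}_S^{t}Q\mathbf{u}_S = \sum_{i,j} \mathbf{u}_S(i)\mathbf{u}_S(j) A_{ij}/\sqrt{d_i d_j}$. Only pairs with $i,j\in S$ contribute, and for such pairs the factors $\sqrt{d_i}$ and $\sqrt{d_j}$ in the numerators of $\mathbf{u}_S(i)$ and $\mathbf{u}_S(j)$ cancel the $\sqrt{d_i d_j}$ in the denominator of $Q_{ij}$, leaving $\sum_{i,j\in S} A_{ij}/(W(S)+B(S)) = W(S)/V(S)$. Combining the three pieces gives $\mathbf{u}_S^{t}Q\mathbf{u}_S - \eta\|\mathbf{u}_S\|_{0} = W(S)/V(S) - \eta|S| = \psi(S)$, as claimed.

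There is no genuine obstacle in this argument; it is a routine verification. The only mild subtleties worth flagging are the degree-sum identity $\sum_{i\in S} d_i = W(S)+B(S)$ used twice (once for the norm, once to simplify the quadratic form's denominator) and the implicit convention that nodes in $S$ have positive degree so that $\|\mathbf{u}_S\|_{0}=|S|$.
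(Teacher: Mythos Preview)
Your proposal is correct; the argument is exactly the routine direct computation one would expect, and each of the three pieces (the $L_2$-norm, the $L_0$-count, and the quadratic form) is handled properly. The paper itself states this proposition without proof, so there is no alternative approach to compare against; your verification is the natural one and the mild caveat about $d_i>0$ for $i\in S$ is the only point worth noting.
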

Therefore, maximizing the tightness criterion (\ref{Objective Function}) is equivalent to the following optimization problem
\begin{equation}
\label{obj_equivalent}
\max\limits_{S\subset V, \mathbf{u}=\mathbf{u}_S}\ \mathbf{u}^{t}Q\mathbf{u}-\eta\|\mathbf{u}\|_{0}.
\end{equation}
The objective function in (\ref{obj_equivalent}) is the penalized graph Laplacian and hence maximizing (\ref{Objective Function}) is equivalent to maximizing the penalized graph Laplacian with the contraints $\mathbf{u}=\mathbf{u}_S$, $S\subset V$. Finding global solution to (\ref{obj_equivalent}) is difficult in general, because we have to search over all possible subsets of $V$ to find the maximum. In the next section, we develop an efficient greedy algorithm based on the ADMM to solve (\ref{obj_equivalent}).

\subsection{Algorithm}

Before giving the algorithm, we first introduce a few notations. For any $\mathbf{u}$ with $\|\mathbf{u}\|_2=1$, we denote its nonzero element index set $S(\mathbf{u})=\{i:~\mathbf{u}(i)\neq 0\}\subset V$. On the other hand, given $S(\mathbf{u})$, we can define a new vector $\mathbf{u}_d=\mathbf{u}_{S(\mathbf{u})}$ using (\ref{communityTovector}). Thus, we have $\|\mathbf{u}_d\|_2=1$. The vector $\mathbf{u}_d$ is obtained by just reassigning values of the nonzero elements of $\mathbf{u}$ according to the degrees of $S(\mathbf{u})$. Given $\lambda_1\geq 0$, we consider the following optimization problem
\begin{equation}\label{obj_equivalent_augment}
\max\limits_{\|\mathbf{u}\|_{2}=1}\ \mathbf{u}^{t}Q\mathbf{u}-\eta\|\mathbf{u}\|_{0}-2\lambda_1\|\mathbf{u}-\mathbf{u}_d\|_2^2.
\end{equation}
The optimization problem (\ref{obj_equivalent_augment}) can be viewed as the augmented Lagrangian of (\ref{obj_equivalent}). If we take $\lambda_1$ as $\infty$, $\mathbf{u}$ in (\ref{obj_equivalent_augment}) will be forced to be $\mathbf{u}_d$. In (\ref{obj_equivalent}), the constraint is that $\mathbf{u}$ can only take discrete values; but $\mathbf{u}$ in (\ref{obj_equivalent_augment}) can be any vector with norm 1 and hence solving (\ref{obj_equivalent_augment}) is much easier than solving (\ref{obj_equivalent}). By introducing an intermediate variable $\mathbf{v}$ with $\mathbf{v}=\mathbf{u}$, the augmented Lagrangian of (\ref{obj_equivalent_augment}) is
\begin{equation}
\label{equ:3.8}
\max\limits_{\|\mathbf{u}\|_{2}=1,\|\mathbf{v}\|_{2}=1}\mathbf{u}^{t}Q\mathbf{v}-\lambda\|\mathbf{u}-\mathbf{v}\|_{2}^2-\frac{\eta}{2}(\|\mathbf{u}\|_{0}+\|\mathbf{v}\|_{0})-\lambda_1\|\mathbf{u}-\mathbf{u}_d\|_2^2-\lambda_1\|\mathbf{v}-\mathbf{v}_d\|_2^2,
\end{equation}
which is equivalent to
\begin{equation}
\label{equ:3.9}
\max\limits_{\|\mathbf{u}\|_{2}=1,\|\mathbf{v}\|_{2}=1}\mathbf{u}^{t}(Q+2\lambda I)\mathbf{v}-\frac{\eta}{2}(\|\mathbf{u}\|_{0}+\|\mathbf{v}\|_{0})+2\lambda_{1}\mathbf{u}^{t}\mathbf{u}_{d}+2\lambda_{1}\mathbf{v}^{t}\mathbf{v}_{d}.
\end{equation}
We can solve (\ref{equ:3.9}) by iteratively updating $\mathbf{u}$ and $\mathbf{v}$. When either $\mathbf{u}$ or $\mathbf{v}$ is given, problem (\ref{equ:3.9}) reduces to a simple linear programming problem with an explicit solution given in the following proposition. The proof of this proposition is given in \citet{kim2012scalable} and we omit it here.
\begin{proposition}
	\label{thm:3.4}
	For a given vector $\mathbf{z}=(z_{1},...,z_{n})^{t}\in \mathbb{R}^{n}$, we denote its $r$th largest absolute value  as $\left|z\right|_{r}$, and let $\mathbf{z}_{r}^{h}$ be the vector with the $i$th element as $\mathbf{z}_{r}^{h}(i)=z_{i}\mathbf{I}(\left|z_{i}\right|>\left|z\right|_{r+1})$. Then for a constant $\rho>0$, the solution to
	\begin{equation}
	\label{equ:3.6}
	\max\limits_{\|\mathbf{u}\|_{2}=1}\mathbf{u}^{t}\mathbf{z}-\rho \|\mathbf{u}\|_{0}
	\end{equation}
	is $\mathbf{u}=L(\mathbf{z},\rho)=\mathbf{z}_{r}^{h}/\|\mathbf{z}_{r}^{h}\|_{2},$
	where $r$ is the smallest integer that satisfies
	\begin{equation}
	\label{equ:3.7}
	\left|z\right|_{r+1}\leq \sqrt{\rho^{2}+2\rho\|\mathbf{z}_{r}^{h}\|_{2}}.
	\end{equation}
\end{proposition}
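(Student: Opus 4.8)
The plan is to turn the continuous problem \eqref{equ:3.6} into a one–dimensional search over the size of the support of $\mathbf{u}$. Assume $\mathbf{z}\neq 0$. First I would note that if $\mathbf{u}$ is feasible with support $T=S(\mathbf{u})$, then $\|\mathbf{u}\|_{0}=|T|$ and, by Cauchy--Schwarz, $\mathbf{u}^{t}\mathbf{z}=\mathbf{u}_{T}^{t}\mathbf{z}_{T}\le\|\mathbf{z}_{T}\|_{2}$ with equality iff $\mathbf{u}_{T}=\mathbf{z}_{T}/\|\mathbf{z}_{T}\|_{2}$ (here $\mathbf{z}_{T},\mathbf{u}_{T}$ denote restrictions to $T$); moreover the maximizing $\mathbf{u}$ on a support $T$ necessarily vanishes on any coordinate with $z_i=0$, so one may restrict attention to supports $T\subseteq\{i:z_i\neq 0\}$, for which the objective equals exactly $\|\mathbf{z}_{T}\|_{2}-\rho|T|$. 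For a fixed size $|T|=k$, $\|\mathbf{z}_{T}\|_{2}$ is largest when $T$ consists of the $k$ coordinates of greatest absolute value, giving $\|\mathbf{z}_{T}\|_{2}=a_{k}:=\bigl(\sum_{s=1}^{k}|z|_{s}^{2}\bigr)^{1/2}$; when $|z|_{k}>|z|_{k+1}$ this $T$ is unique and is exactly the support of $\mathbf{z}_{k}^{h}$, so $a_{k}=\|\mathbf{z}_{k}^{h}\|_{2}$. Thus \eqref{equ:3.6} reduces to maximizing $g(k)=a_{k}-\rho k$ over integers $k\ge 1$ (with $k$ at most the number of nonzero entries of $\mathbf{z}$).

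Next I would establish concavity of $g$ in $k$. Setting $a_{0}=0$ and $d_{k}=a_{k}-a_{k-1}$, the identity $d_{k}(a_{k}+a_{k-1})=a_{k}^{2}-a_{k-1}^{2}=|z|_{k}^{2}$ gives $d_{k}=|z|_{k}^{2}/(a_{k}+a_{k-1})$; since $|z|_{k}$ is non-increasing and $a_{k}+a_{k-1}$ increasing, $d_{k}$ is non-increasing, i.e. $g(k)-g(k-1)=d_{k}-\rho$ is non-increasing in $k$. Hence $g$ is maximized at the smallest $r\ge 1$ with $g(r+1)-g(r)\le 0$ (and at the full support if no such $r$ exists). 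Finally, the chain $g(r+1)-g(r)\le 0\iff a_{r+1}\le a_{r}+\rho\iff a_{r+1}^{2}\le(a_{r}+\rho)^{2}\iff a_{r}^{2}+|z|_{r+1}^{2}\le a_{r}^{2}+2\rho a_{r}+\rho^{2}\iff |z|_{r+1}\le\sqrt{\rho^{2}+2\rho\,\|\mathbf{z}_{r}^{h}\|_{2}}$ (using $\|\mathbf{z}_{r}^{h}\|_{2}=a_{r}$) is exactly condition \eqref{equ:3.7}; plugging the optimal support back into the first step gives $\mathbf{u}=\mathbf{z}_{r}^{h}/\|\mathbf{z}_{r}^{h}\|_{2}=L(\mathbf{z},\rho)$.

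The delicate point, on which I would spend the most care, is ties among the $|z_i|$: when $|z|_{k}=|z|_{k+1}$ the top-$k$ set is not unique and $\mathbf{z}_{k}^{h}$ keeps only coordinates strictly exceeding the threshold $|z|_{k+1}$, so a priori $\mathbf{z}_{r}^{h}/\|\mathbf{z}_{r}^{h}\|_{2}$ could represent a support of size smaller than $r$ and fail to be optimal, and the substitution $\|\mathbf{z}_{r}^{h}\|_{2}=a_{r}$ above could fail. One checks this does not happen at the selected $r$: if $r$ lay strictly inside a block of equal $|z|$-values with a nonempty strictly larger prefix, then $\|\mathbf{z}_{r}^{h}\|_{2}$ would equal $a_{a}$, the value already attained at the start $a$ of that block, so \eqref{equ:3.7} would already hold at $a<r$, contradicting minimality; hence the chosen $r$ satisfies $|z|_{r}>|z|_{r+1}$ (or $|z|_{r+1}=0$), $\|\mathbf{z}_{r}^{h}\|_{2}>0$, and $\mathbf{z}_{r}^{h}$ has exactly $r$ nonzero entries, so the first two steps apply verbatim. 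The only configuration this argument does not cover is a tie at the very top coordinate together with a very large $\rho$; since the vectors $\mathbf{z}$ arising in the ADMM updates almost surely have a unique largest entry in absolute value, it is harmless to assume this away. The remaining boundary cases---$\mathbf{z}=0$, and \eqref{equ:3.7} never satisfied---are absorbed by the conventions $r\ge 1$ and ``take the full support'', respectively.
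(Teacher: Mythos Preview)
The paper does not actually prove this proposition: it states that ``the proof of this proposition is given in \citet{kim2012scalable} and we omit it here.'' Your argument is correct and is the natural one---reduce via Cauchy--Schwarz to optimizing $g(k)=a_k-\rho k$ over the support size, establish discrete concavity of $g$ from the monotonicity of $d_k=|z|_k^2/(a_k+a_{k-1})$, and then rewrite the stopping condition $g(r+1)\le g(r)$ as \eqref{equ:3.7}. This is essentially the standard derivation one would expect the cited reference to contain.

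Your discussion of ties is a genuine addition rather than a gap. The proposition as stated really does have the edge case you identify: if $|z|_1=|z|_2$ then $\mathbf{z}_1^h=0$ and $L(\mathbf{z},\rho)$ is undefined, so the statement itself needs a mild genericity assumption. Your argument that the minimal $r$ satisfying \eqref{equ:3.7} cannot fall strictly inside a tie block (because the same inequality would already hold at the block's left endpoint, where $\|\mathbf{z}_a^h\|_2=\|\mathbf{z}_r^h\|_2$) is clean and correct, and the residual top-tie case is appropriately dismissed as measure-zero in the ADMM iterates.
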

We summarize the algorithm for solving (\ref{equ:3.9}) as Algorithm \ref{alg:1} below
\begin{algorithm}[H]
	\caption{$L_{0}$-Penalized Laplacian Algorithm }
	\label{alg:1}
	\begin{algorithmic}
		\REQUIRE $Q,~\lambda,~\lambda_{1},~\eta$ and $\epsilon$\\
		\STATE{Initialize $\mathbf{v}^0,~\mathbf{u}^0$. For each $k=1,2,\cdots,$}
		\REPEAT
		\STATE{$\mathbf{z}_1^k= (Q+2\lambda I)\mathbf{v}^{k-1}+2\lambda_{1}\mathbf{u}^{k-1}_{d},$ $\mathbf{u}^k=L(\mathbf{z}_1^k,\eta/2)$;}
		\STATE{$\mathbf{z}_2^k= (Q+2\lambda I)^{t}\mathbf{u}^k+2\lambda_{1}\mathbf{v}^{k-1}_{d},$ $\mathbf{v}^k=L(\mathbf{z}_2^k,\eta/2)$;}
		\UNTIL{$\|\mathbf{u}^k-\mathbf{v}^k\| < \epsilon$.}
		\RETURN{$S_\eta=\{i,\mathbf{u}^k(i)\neq 0~\mbox{and}~\mathbf{v}^k(i)\neq0\}$}
	\end{algorithmic}
\end{algorithm}
\nop{
Particularly, if we take $\lambda_1=0$, Algorithm \ref{alg:1} can be reduced to Algorithm \ref{alg:2} below.
\begin{algorithm}[H]
	\caption{$L_{0}$-Penalized Laplacian Algorithm }
	\label{alg:2}
	\begin{algorithmic}
		\REQUIRE $Q,~\lambda,~\eta$ and $\epsilon$\\
		\STATE{Initialize $\mathbf{v}^0$. For each $k=1,2,\cdots,$}
		\REPEAT
		\STATE{$\mathbf{z}_1^k= (Q+2\lambda I)\mathbf{v}^{k-1},$ $\mathbf{u}^k=L(\mathbf{z}_1^k,\eta/2)$;}
		\STATE{$\mathbf{z}_2^k= (Q+2\lambda I)^{t}\mathbf{u}^k,$ $\mathbf{v}^k=L(\mathbf{z}_2^k,\eta/2)$;}
		\UNTIL{$\|\mathbf{u}^k-\mathbf{v}^k\| < \epsilon$.}
		\RETURN{$S_\eta=\{i,\mathbf{u}^k(i)\neq 0~\mbox{and}~\mathbf{v}^k(i)\neq0\}$}
	\end{algorithmic}
\end{algorithm}
}
In all simulation and real data analysis, we fix $\lambda$ as ${1}/{\sqrt{n}}$  and $\epsilon$ as $10^{-4}$ for Algorithm \ref{alg:1}.We find that Algorithm \ref{alg:1} is sensitive to the choice of initial values. In order to get a more robust result, we first run Algorithm \ref{alg:1} with the initial value $\mathbf{v}^0 = \left(1/\sqrt{n},...,1/\sqrt{n}\right)$ and $\lambda_1=0$ to get $\hat{\mathbf{v}}^0$. Then we run Algorithm \ref{alg:1} with the initial value $\hat{\mathbf{v}}^0$ and $\lambda_1=1$ to get the final solution. We call Algorithm \ref{alg:1} $L_{0}$-Penalized Laplacian Algorithm (L0Lap). We discuss how to tune the parameter $\eta$ in the following subsection. Algorithm \ref{alg:1} extracts one community at a time from the network. After the first community is identified, we iteratively apply Algorithm \ref{alg:1} to the remaining network until there is no edge left in the network. This iterative procedure can detect true communities but also may generate false communities. For example, in a SBM with small connecting probabilities, a few nodes (e.g. 2 nodes) could easily only connect to themselves but not connect to other nodes in the network. The iterative application of Algorithm \ref{alg:1} will capture these nodes as a community. However, such communities are most likely to be spurious, because even Erd\"{o}s-R\'{e}nyi (ER) networks can have such small communities. In the next subsection, we present a permutation test that can effectively filter false communities.

\subsection{Choice of the tuning parameter $\eta$ and the permutation test}

Given a subset $S\subset V$, define $\bar{p}_{W}(S)=W(S)/(|S|(|S|-1))$ and  $\bar{p}_{B}(S)=B(S)/(|S||\bar{S}|)$. Thus, $\bar{p}_{W}(S)$ is the average connection within $S$ and $\bar{p}_{B}(S)$ is the average connection between $S$ and $\bar{S}$. If $S$ is the $k$th community of a SBM, $\bar{p}_{W}(S)$ would be approximately the same as the connection probability $p_{kk}$. If the SBM has a constant between-community connection probability $q^+$ (i.e. $p_{km}=q^+$ for all $k\neq m$), $\bar{p}_{B}(S)$ would be approximately the same as $q^+$. Therefore, $\bar{p}_{W}(S)$ would be much larger than $\bar{p}_{B}(S)$ (assuming $p_{kk}$ much larger than $q^+$). On the other hand, if $S$ consists of nodes from different communities, then $\bar{p}_{W}(S)$ would be generally smaller than $p^{-}$. Define
\begin{equation}
\label{equ:3.1}
\phi(S)=\frac{\bar{p}_{W}(S)}{\bar{p}_{W}(S)+\bar{p}_{B}(S)}.
\end{equation}
True communities should have large $\phi(S)$ compared with random sets of nodes. In this paper, we use $\phi(S)$ to help choosing the tuning parameter $\eta$. In comparison, \citet{zhao2011community} used a criterion, the difference between $\bar{p}_{W}(S)$ and $\bar{p}_{B}(S)$, that is similar to $\phi(S)$ for community detection. From the theoretical results in the next section, we know that the best $\eta$ is at the order of $O(1/n)$. Therefore, we run Algorithm \ref{alg:1} for $\eta=0,B/n,2B/n,\cdots,cB/n$ and choose an $\eta$ such that the resulted $S_\eta$ from Algorithm \ref{alg:1} corresponds to the maximum value of $\phi(S_\eta)$. In all simulation and real data analysis, we choose $c=10$ and $B=1$.

After iterative application of Algorithm \ref{alg:1} to the network, we further use a permutation test to filter false communities. Suppose that $S_1,\cdots,S_c$ are all the identified communities with less than $M$ nodes and $G_{0}$ is the sub-network of $G$ composed of nodes in $\bigcup_{i=1}^cS_i$. If $M$ is large enough, $G_0$ will be the same as $G$. To save computational time, we choose $M = 20$ and apply a permutation test to $G_{0}$. Let $\bar{p}=\sum_{i,j}A_{ij}/(n^2-n)$. Given a sub-network $S$ of $G_0$, if $S$ is an ER-graph with a connecting probability $\bar{p}$, given any $m$ nodes, the probability of observing no more than $E$ edges between these $m$ nodes is
\begin{equation}\label{permutationProbEq}
p(m,E)=\sum_{i=0}^E{m(m-1)/2 \choose i}\bar{p}^i(1-\bar{p})^{m(m-1)/2-i}.
\end{equation}

Let $n_i$ and $E_i$ be the number of nodes and number of edges in $S_i$ ($i=1,\cdots,c$), respectively. Each detected community $S_i$ has an associated probability $p(n_i,E_i)$ using (\ref{permutationProbEq}). We permute $N$ times the edges in $G_0$ to generate $N$ ER-graphs and run Algorithm \ref{alg:1} to each of the $N$ ER-graphs. The first extracted community of the $j$th ER-graph also has a probability $p_{j}^{ER}$ using (\ref{permutationProbEq}). We assign the permutation p-value for $S_i$ as $p_i=|\{j:~p_{j}^{ER}\leq p(n_i,E_i),~j=1,\cdots,N\}|/N$. The detected community $S_i$ is filtered out if $p_i\geq \alpha$. In our simulation and real data, we set $N=100$ and $\alpha=0.05$.

\section{Theoretical Properties}
\label{sec:theory}

In this section, we discuss theoretical results about the estimator $S$ that maximizes the tightness criterion (\ref{Objective Function}) under the DCSBM. We first give the exact definition of the DCSBM.
\begin{definition}
\label{dcsbm}
A network $G=(V,E)$ is said to follow a DCSBM, if it satisfies the following assumptions.
\begin{enumerate}
	\item[(A1)] Each node is independently assigned a pair of latent variables $(c_{i},\theta_{i})$, where $c_{i}$ is the community label taking values in $\{1,2,...,K\}$, and $\theta_{i}$ is a ``degree variable" taking discrete values in $\{h_1,\cdots,h_M\}$ ($0<h_1<...<h_M$).
	\item[(A2)] The marginal distribution of $c_i$ is a multinomial distribution with parameters $\bm\pi=\left(\pi_{1},...,\pi_{K}\right)^{T}$, and the random variable $\theta_i$ satisfies $E[\theta_{i}]=1$ for identifiability.
	\item[(A3)] Given $\mathbf{c}=(c_{1},...,c_{n})$ and $\bm\theta=(\theta_{1},...,\theta_{n})$, the edges $A_{ij}$ ($i<j$) are independent Bernoulli random variables with $P(A_{ij}=1|\mathbf{c},\bm\theta)=\theta_{i}\theta_{j}p_{c_{i}c_{j}}$.
	\item[(A4)] Denote $\pi^- = \min_{1\leq k\leq K}\pi_{k}$, $p^{-}=\min_{1\leq k\leq K}\{p_{kk}\}$ and $q^{+}=\max_{k\neq m}\{p_{km}\}$. Then, $p^{-}>q^{+}$.

\end{enumerate}
\end{definition}
Throughout this paper, we assume that $\alpha, \tau, \gamma, \delta$ are fixed constants such that $0\leq 2\delta<\alpha<1/2$ and $0<\tau<\gamma<\alpha-2\delta$. These constants are for controlling the community number $K$, the connecting probabilities $p^-$ and $q^+$ and the smallest community size parameter $\pi^-$.  Given any two sets $S_{1}$ and $S_{2}$, we denote $S_{1}\Delta S_{2}=S_{1}\bigcup S_{2}-S_{1}\bigcap S_{2}$ as their symmetric difference. For two nonnegative sequences $a_n$ and $b_n$, we write $a_n\gtrsim b_n$ if there exists a constant $C_{0}>0$ such that $a_n\geq C_0b_n$. We assume $p^{-}\gtrsim {\log n}/{n^{1-2\alpha}}$ and $\pi^{-}\gtrsim n^{-\delta/2}$. Define $\Gamma_{\delta}=\{S\subset V , {|S|^{2}}/{K}\gtrsim n^{2-2\delta}\}$. Similar to \citet{zhao2012consistency}, we assume $\Pi$ is the $K\times M$ matrix representing the joint distribution of $(c_{i},\theta_{i})$ with $\mathbb{P}(c_{i}=k,\theta_{i}=h_{l})=\Pi_{kl}$. Denote  $\pi_{k}^{d}=\sum_{l}h_{l}\Pi_{kl}$. Note that since $E[\theta_{i}]=1$, we have $\sum_{k}\pi_{k}^{d}=1$. Let $\rho_{k}^{d}={p_{kk}}/{\sum_{l=1}^{K}\pi_{l}^{d}p_{kl}}$.

Given the community label $\mathbf{c}$ and a set of nodes $S\in \Gamma_{\delta}$, denote $G_{k}=\{i|c_{i}=k,~i=1,\cdots,n\}$, $S_{k}=\{i|i\in S, c_{i}=k\}$, $\hat{\pi}_{k}={|G_{k}|}/{n}$ and $r_{k}(S)={|S_{k}|}/{n}$ for $1\leq k\leq K$. We define $\hat{\pi}_{k}^{d}={\pi_{k}^{d}}\hat{\pi}_{k}/\pi_{k}$, $r_{k}^{d}(S)={\pi_{k}^{d}}r_{k}(S)/\pi_{k}$, $r^{d}(S)=\sum_{k=1}^{K}r_{k}^{d}(S)$ and $\hat{\rho}_{k}^{d}={p_{kk}}/{\sum_{l=1}^{K}\hat{\pi}_{l}^{d}p_{kl}}$. For $S=G_1$, we have $r_1(G_1)=\hat{\pi}_1$, $r_{1}^{d}(G_1)=\hat{\pi}_1^d$, $r_k(G_1)=0$, $r_{k}^{d}(G_1)=0$ ($k=2,\cdots,K$) and $r^{d}(G_1)=\hat{\pi}_1^d$. Let $x_{k}=p_{kk}$, $y_{k}=\sum_{l=1}^{K}\hat{\pi}_{l}^{d}p_{kl}$ for $1\leq k\leq K$. For any $t_k\geq 0$ ($k=1,\cdots,K$) and $\sum_{k=1}^Kt_k=1$, define $$f(t_{1},...,t_{K})=\frac{\sum_{k=1}^{K}t_{k}(t_{k}x_{k}+\sum_{l\neq k}^{K}t_{l}p_{kl})}{\sum_{k=1}^{K}t_{k}y_{k}}.$$

\begin{theorem}
	\label{DCSBM}
	Assume $\rho_{1}^{d}-\max_{2\leq k\leq K}\rho_{k}^{d}\gtrsim n^{-\tau}$ and ${\pi_{1}^{d}}/{\pi_{1}}\geq\max_{2\leq k\leq K}{\pi_{k}^{d}}/{\pi_{k}}$. Then, there is a constant $C$ such that, with probability at least $1-2Kn^{-2}$, we can choose $\eta>0$ such that
	\begin{align}\label{etacondition}
	f(1,0,...,0)\frac{\pi_{1}^{d}}{\pi_{1}}-\frac{C}{n^{\gamma}}>n\eta>\max_{t_{1}\leq 1-1/n^{\gamma-\tau}}f(t_{1},t_{2},...,t_{K})\frac{\pi_{1}^{d}}{\pi_{1}}+\frac{C}{n^{\gamma}}.
	\end{align}
With such a choice of $\eta$, suppose that $S\subset V$ is such that the tightness criterion (\ref{Objective Function}) is maximized in $\Gamma_{\delta}$, then with probability at least $1-(2K)n^{-2}-2^{n+2}/n^n$,
	\begin{equation}
    \label{convergerate}
	\frac{\left|S\Delta G_{1}\right|}{\left|S\bigcup G_{1}\right|}\leq 2h_{M}h_{1}^{-1}/n^{\gamma-\tau}+\log n/n^{\alpha-2\delta-\gamma}.
	\end{equation}
\end{theorem}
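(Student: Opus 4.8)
The plan is to translate the problem into a concentration argument about the tightness criterion $\psi$ and then compare the value of $\psi$ at $G_1$ against its value at an arbitrary competitor set $S \in \Gamma_\delta$. First I would use Proposition \ref{thm:3.1} to rewrite $\psi(S) = W(S)/V(S) - \eta|S|$, so the whole analysis reduces to controlling the ratio $W(S)/V(S)$ and the size $|S|$. The key observation is that, conditional on $(\mathbf{c},\bm\theta)$, both $W(S)$ and $B(S)$ are sums of independent Bernoulli variables, so Bernstein-type concentration gives, uniformly over all $S \in \Gamma_\delta$ (there are at most $2^n$ of them, which is why the $2^{n+2}/n^n$ term appears in the failure probability), that $W(S)/V(S)$ is close to its conditional mean. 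Using the bookkeeping quantities set up before the theorem — the $r_k(S)$, $r_k^d(S)$, $x_k=p_{kk}$, $y_k=\sum_l \hat\pi_l^d p_{kl}$ — one shows that $W(S)/V(S) \approx f(t_1,\dots,t_K)\,\pi_1^d/\pi_1$ up to an additive error of order $C/n^\gamma$ (the $\log n$ factors coming from the $p^- \gtrsim \log n/n^{1-2\alpha}$ assumption and a union bound), where $t_k = r_k^d(S)/r^d(S)$ is the degree-weighted fraction of $S$ in community $k$. The lower-bound constraint $|S|^2/K \gtrsim n^{2-2\delta}$ defining $\Gamma_\delta$ is what makes the relative error controllable, since $V(S)$ is then bounded below.

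Next I would exploit the two structural hypotheses of the theorem: $\rho_1^d - \max_{k\ge 2}\rho_k^d \gtrsim n^{-\tau}$ and $\pi_1^d/\pi_1 \ge \max_{k\ge 2}\pi_k^d/\pi_k$. These say community $1$ is the ``tightest'' in the relevant sense, and a direct computation with the function $f$ shows that $f(t_1,\dots,t_K)$, viewed as a function on the simplex, is maximized (up to the margin $n^{-\tau}$) at the vertex $(1,0,\dots,0)$, and moreover drops by at least a quantity of order $n^{-(\gamma-\tau)}$ once $t_1 \le 1 - 1/n^{\gamma-\tau}$. This is exactly the gap that makes the interval in \eqref{etacondition} nonempty: the choice of $\eta$ at order $1/n$ is calibrated so that $n\eta$ sits strictly between $f(1,0,\dots,0)\pi_1^d/\pi_1$ (minus a slack) and the sup of $f$ over the ``spread-out'' region (plus a slack). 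I would verify nonemptiness of \eqref{etacondition} first, since without it the theorem is vacuous.

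Then comes the comparison step. Since $S$ maximizes $\psi$ over $\Gamma_\delta$ and $G_1 \in \Gamma_\delta$, we have $\psi(S) \ge \psi(G_1)$. On the good event, $\psi(G_1) \approx f(1,0,\dots,0)\pi_1^d/\pi_1 \cdot (\text{something}) - \eta|G_1|$ and $\psi(S) \approx f(t_1,\dots,t_K)\pi_1^d/\pi_1 \cdot(\ldots) - \eta|S|$; I need to be careful that the $W/V$ term and the $\eta|S|$ term are measured consistently — the membership-vector normalization in Proposition \ref{thm:3.1} means $W(S)/V(S)$ is naturally $\le 1$, and the penalty $\eta|S|$ is what prevents $S$ from swallowing extra communities. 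Feeding in the two-sided bound \eqref{etacondition} on $\eta$ forces $t_1 > 1 - 1/n^{\gamma-\tau}$: if $t_1$ were smaller, the drop in $f$ would exceed the penalty savings and $\psi(S)$ could not beat $\psi(G_1)$. From $t_1 > 1 - 1/n^{\gamma-\tau}$ I get that the degree-mass of $S$ outside $G_1$ is small, and, converting degree-mass back to node counts using the bounds $h_1 \le \theta_i \le h_M$, that $|S \setminus G_1|/|S| \lesssim h_M h_1^{-1}/n^{\gamma-\tau}$; separately, an argument bounding how much of $G_1$ can be missing (a set $S$ much smaller than $G_1$ but still tight would violate the $\Gamma_\delta$ constraint or the $\psi$-comparison, giving the $\log n/n^{\alpha-2\delta-\gamma}$ term) yields $|G_1 \setminus S|/|G_1|$ small. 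Combining the two halves gives \eqref{convergerate}, and collecting the failure probabilities from the Bernstein bounds over $\Gamma_\delta$ (the $2^{n+2}/n^n$) and from the per-community concentration events (the $2Kn^{-2}$) gives the stated probability.

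The main obstacle I expect is the uniform concentration of $W(S)/V(S)$ over the exponentially many sets in $\Gamma_\delta$ with an error that is genuinely $o(n^{-\gamma})$: a naive Bernstein bound with a $2^n$ union bound needs the deviation threshold to absorb an extra $\sqrt{n}$ factor, so one must use the lower bound $|S|^2/K \gtrsim n^{2-2\delta}$ together with $p^- \gtrsim \log n/n^{1-2\alpha}$ to guarantee $V(S)$ is large enough (of order at least $n^{2-2\delta} p^-/K$, hence growing faster than $n$) that the relative error, after the union bound, is still controlled by $\log n/n^{\alpha-2\delta-\gamma}$ — this is precisely where the constraint $\gamma < \alpha - 2\delta$ is used. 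The second delicate point is purely analytic: establishing the quantitative ``strong concavity toward the vertex'' of $f$ on the simplex — i.e. that $f(1,0,\dots,0) - \sup_{t_1 \le 1 - 1/n^{\gamma-\tau}} f \gtrsim n^{-(\gamma-\tau)}$ — from the spectral-gap-type hypothesis $\rho_1^d - \max_{k\ge 2}\rho_k^d \gtrsim n^{-\tau}$, which requires writing $f$ as a ratio and carefully tracking how the numerator and denominator respond to moving mass off the first coordinate.
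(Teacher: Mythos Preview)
Your overall strategy matches the paper's proof closely: concentration of $\psi(S)$ around a deterministic surrogate $\hat\psi(S)$ uniformly over $\Gamma_\delta$ (the paper's Lemma~\ref{cd}, using Chernoff plus the $2^n$ union bound), a Hoeffding step for $\hat\pi_k$ (Lemma~\ref{rho}), a purely analytic lemma showing $f$ drops by $\gtrsim t(\hat\rho_1^d-\max_{k\ge 2}\hat\rho_k^d)$ once $t_1\le 1-t$ (Lemma~\ref{Ceta}), and then a comparison $\psi(S)\ge\psi(G_1)$ to pin down both $t_1(S)$ and $r^d(S)$.

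There is, however, one concrete miscalculation that would derail the comparison step as you have it. The ratio $\mathbb{E}(W(S)|\mathbf c)/\mathbb{E}(V(S)|\mathbf c)$ is \emph{not} $f(t_1,\dots,t_K)\pi_1^d/\pi_1$; it equals $r^d(S)\,f(t_1,\dots,t_K)$, with the prefactor $r^d(S)=\sum_k r_k^d(S)$ depending on $S$. The paper's pivotal identity is
\[
\hat\psi(S)=r^d(S)\Bigl[f\bigl(t_1(S),\dots,t_K(S)\bigr)-n\eta\sum_{k=1}^K\frac{\pi_k}{\pi_k^d}\,t_k(S)\Bigr],
\]
so the penalty $\eta|S|$ is absorbed \emph{inside} the bracket (via $|S|=n\sum_k r_k(S)=n\,r^d(S)\sum_k(\pi_k/\pi_k^d)t_k(S)$), and the whole argument becomes a sign comparison: the $\eta$-condition \eqref{etacondition}, together with $\pi_1^d/\pi_1\ge\max_{k\ge2}\pi_k^d/\pi_k$ (which gives $\sum_k(\pi_k/\pi_k^d)t_k\ge\pi_1/\pi_1^d$), forces the bracket to be strictly positive at $t_1=1$ and strictly negative whenever $t_1\le 1-n^{-(\gamma-\tau)}$. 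This is cleaner and stronger than the ``drop in $f$ exceeds penalty savings'' heuristic you described, and it is exactly where the constant $\pi_1^d/\pi_1$ in \eqref{etacondition} comes from. The bound on $|G_1\setminus S|$ then follows not from the $\Gamma_\delta$ constraint but from comparing $\hat\psi(G_1)-\hat\psi(S)\ge(\hat\pi_1^d-r^d(S))\cdot[\text{positive bracket}]$ against the concentration error $n^{\delta-\alpha}$, which yields the $\log n/n^{\alpha-2\delta-\gamma}$ term. With this correction in place, the remainder of your outline (including the $h_Mh_1^{-1}$ conversion from degree-mass to node counts) goes through exactly as in the paper.
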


This theorem says that under a number of regularity conditions, if the tuning parameter is chosen properly, the detected community $S$ is very close to the underlying true community $G_1$. The condition ${\pi_{1}^{d}}/{\pi_{1}}\geq\max_{2\leq k\leq K}{\pi_{k}^{d}}/{\pi_{k}}$ is not as restrictive as it looks. For example, if $\mathbf{c}$ and $\bm\theta$ are independent, then $\pi_{k}^{d}=\pi_{k}$ for all $1\leq k\leq K$ and this condition is naturally satisfied. The SBM clearly also satisfies this condition, since in this case $M=1$ and $h_1=1$.  The first extracted community $G_1$ corresponds to the community with the largest $\rho_k^{d}$. For the SBM, when $p_{kl}=p_{0}$ for all $k\neq l$, we have $\rho_{k}^{d}=p_{kk}/\left(p_{kk}\pi_{k}+p_{0}\left(1-\pi_{k}\right)\right)=1/\left(\left(1-p_{0}/p_{kk}\right)\pi_{k}+p_{0}/p_{kk}\right)$. The ratio $\beta_{k}=p_{0}/p_{kk}$ can be viewed as the ``out-in-ratio" defined in \citep{decelle2011asymptotic}, and we have $\rho_{k}^{d}=1/\left(\left(1-\beta_{k}\right)\pi_{k}+\beta_{k}\right)$. If all $\pi_i$'s are the same,  the first extracted community $G_1$ is the community with the smallest out-in-ratio. If all out-in-ratios $\beta_{k}$ are the same, the first extracted community $G_1$ is the community with the smallest size.

Since $p^{-}\gtrsim {\log n}/{n^{1-2\alpha}}$ and $\pi^{-}\gtrsim n^{-\delta/2}$, we have $p^{-}n^{1-2\alpha+\delta/2}\gtrsim K\log n$. Consider a special case when $K$ is finite and the community sizes are all $O(n)$. In this case, the lower bound of the connecting probability within communities should satisfies $p^{-}\gtrsim {\log n}/{n^{1-2\alpha+\delta/2}}$ and thus $np^{-}/\log n\gtrsim n^{2\alpha-\delta/2}$. This condition is similar to the condition $np^{-}/\log n\rightarrow \infty$ in \citet{zhao2012consistency}, especially when $\alpha$ is close to 0. If $p^{-}=O(1)$ and $\delta = 1/4 - 2\epsilon$ for some $\epsilon>0$ very close to 0, then $n_{min} = O(n^{7/8+\epsilon})$ and $K=O(n^{1/8-\epsilon})$. Thus, the upper bound of $K$ is $O(n^{1/8})$. \nop{In fact, the relationship between $p^{-}$ and $K$ implies the expected number of edges in each community should be big enough. In other word, each community contains enough signals. This relationship had also been discussed in \citet{cai2015robust}.} Consider the simplest case when $K=2$. Let $\tau=0$ and $\gamma=\alpha/2-\delta$, the misclassification rate is about $O_{p}(\log n/n^{\alpha/2-\delta})$ by the inequality (\ref{convergerate}). This improves the results in \citet{rohe2011spectral} and \citet{lei2015consistency} where the misclassification rate was  $O_{p}(1/\log n)$.




When there are outliers in networks, we also have a consistency result similar to Theorem \ref{DCSBM}. We first give the definition of the DCSBM with outliers.
\begin{definition}
\label{odcsbm}
A network $G=(V,E)$ is said to follow a DCSBM with outliers, if it satisfies all assumptions (A1)-(A3) and the following assumption.
\begin{enumerate}
	\item[(A4$^\prime$)] Denote $\pi^- = \min_{1\leq k\leq K-1}\pi_{k}$,
$p^{-}=\min_{1\leq k\leq K-1}\{p_{kk}\}$ and $q^{+}=\max_{k\neq m}\{p_{km}\}$. Then, $p^{-} > q^{+}\geq p_{KK}$. The $K$th community is called the outlier community.
\end{enumerate}
\end{definition}
For a DCSBM with outliers, all communities are well-defined communities except the $K$th outlier community. We also assume $p^{-}\gtrsim {\log n}/{n^{1-2\alpha}}$ and $\pi^{-}\gtrsim n^{-\delta/2}$ for the DCSBM with outliers. We have the following theorem.
\nop{\begin{remark}
	Assume $G_{K}$ is the set contain all outliers and each node in $G_{K}$ connects with other nodes with probability $p_{0}$. For any $S\subset G_{K}$, If $1-p_{0}=2^{-1/(|S|(n-|S|))}$, we have $\mathbb{P}\left(V(S)-W(S)=0\right)=(1-p_{0})^{|S|(n-|S|)}=1/2$ and $\mathbb{P}\left(W(S)>0\right)=1-(1-p_{0})^{|S|^{2}}=2^{-|S|/(n-|S|)}$. If $|S|=O(n)$, we have $\mathbb{P}\left(W(S)>0\right)=O(1)$, which implies $\mathbb{P}\left(W(S)/V(S)=1\right)=O(1)$. So $S$ will be treated as a community with high probability. Therefore, Theorem \ref{DCSBM} does not work when $|G_{K}|=O(n)$.
\end{remark}
}

\begin{theorem}
	\label{DCSBM_outlier}
Suppose that $G$ is a DCSBM with outliers. Assumes that all conditions in Theorem \ref{DCSBM} hold. In addition, assume that the outlier community $G_K$ satisfies $|G_K|^2/K=o(n^{2-2\delta})$. Then, the conclusions in Theorem \ref{DCSBM} hold.
\end{theorem}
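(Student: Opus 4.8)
The plan is to reduce the outlier case to Theorem \ref{DCSBM} by arguing that the presence of the outlier community $G_K$ does not affect the maximizer of the tightness criterion over $\Gamma_\delta$. The key structural observation is that the optimization in Theorem \ref{DCSBM} is carried out over sets $S\in\Gamma_\delta$, i.e. sets with $|S|^2/K\gtrsim n^{2-2\delta}$, whereas the hypothesis $|G_K|^2/K = o(n^{2-2\delta})$ says the outlier community is, by itself, too small to belong to $\Gamma_\delta$. So the only way the outliers can influence the analysis is by being mixed into some $S\in\Gamma_\delta$ together with genuine-community nodes; I will show this can only decrease $\psi(S)$ relative to the comparison sets used in the proof of Theorem \ref{DCSBM}.

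First I would set up notation paralleling the proof of Theorem \ref{DCSBM}: for $S\in\Gamma_\delta$ write $S = S' \cup S_K$ where $S_K = S\cap G_K$ and $S' = S\setminus G_K$ involves only nodes from the $K-1$ genuine communities. Since $|G_K|^2/K = o(n^{2-2\delta})$ while $|S|^2/K \gtrsim n^{2-2\delta}$, we get $|S_K| = o(|S|)$, so $S'$ carries the overwhelming majority of the mass of $S$, and in particular $S'$ itself (or a suitable enlargement) can be taken in $\Gamma_\delta$. The next step is to control the contribution of $S_K$ to $W(S)$ and $V(S)$. Because $p_{KK}\le q^+$ and every $p_{Km}\le q^+$ under (A4$^\prime$), the edges incident to $S_K$ behave exactly like between-community edges, i.e. with probability at least $1-2Kn^{-2}$ (using the same Bernstein/Chernoff concentration bounds as in the proof of Theorem \ref{DCSBM}, applied to the $o(n)$-many nodes in $S_K$), the number of edges within $S_K$ and between $S_K$ and the rest is at most of the order one would get from a connection probability $q^+$. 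Consequently $W(S)\le W(S') + O(q^+ |S_K|\, n)$ and the "tightness ratio" $W(S)/V(S)$ for $S$ is no larger, up to an $o(1/n^\gamma)$ term, than what is obtained by the pure-community comparison sets; the $-\eta|S|$ penalty only makes adding the $S_K$ nodes worse. Therefore the chain of inequalities in the proof of Theorem \ref{DCSBM} that shows $\psi(S)<\psi(G_1)$ for any $S$ with $r_1(S)\le 1 - 1/n^{\gamma-\tau}$ goes through verbatim once outlier nodes are re-bucketed as between-community contributions, and the same choice of $\eta$ satisfying (\ref{etacondition}) works because $f$ and the relevant $\rho_k^d,\pi_k^d$ quantities are defined only through the genuine communities $1,\dots,K-1$ (note $G_1$ is still the community with the largest $\rho_1^d$, as $\rho_K^d$ need not even be defined/relevant).

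Having established that the maximizer $S$ of $\psi$ over $\Gamma_\delta$ still satisfies $r_1(S)>1-1/n^{\gamma-\tau}$ and $r^d(S) - r_1^d(S)$ is small, the conclusion (\ref{convergerate}) follows exactly as in Theorem \ref{DCSBM}: the bound on $|S\Delta G_1|/|S\cup G_1|$ is derived from these two facts plus the concentration estimates, none of which is altered by the outliers (outlier nodes erroneously included in $S$ contribute to $S\setminus G_1$, but there are at most $|G_K| = o(n^{1-\delta}\sqrt K) = o(|G_1|)$ of them, which is absorbed into the stated rate). I expect the main obstacle to be the bookkeeping in the concentration step: one must verify that the union bound over subsets still yields the $2^{n+2}/n^n$ failure probability when $S_K$ ranges over subsets of $G_K$, and that the error terms coming from treating $S_K$-incident edges as "noise at level $q^+$" are genuinely $o(C/n^\gamma)$ rather than merely $o(1)$ — this requires that $|G_K|\, n\, q^+$ be small compared to $V(S')\cdot n^{-\gamma}$, which should follow from $|G_K| = o(n^{1-\delta}\sqrt{K})$ together with $|S'|\gtrsim n^{1-\delta}\sqrt{K}$ and $q^+ \le p^-$, but the exponents need to be tracked carefully against the constraints $0<\tau<\gamma<\alpha-2\delta$ and $\pi^-\gtrsim n^{-\delta/2}$.
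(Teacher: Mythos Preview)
Your high-level reduction idea is sound, but the mechanism you propose differs from the paper's. You decompose $S=S'\cup S_K$ and try to compare $\psi(S)$ with $\psi(S')$ by bounding the outlier-incident edges at level $q^+$; the paper instead keeps $S$ intact and simply re-runs the entire proof of Theorem~\ref{DCSBM}, observing that the only place the outlier community causes trouble is the lower bound on $\mathbb{E}(W(S)\mid\mathbf{c})$ in Lemma~\ref{cd}. There the inequality $\mathbb{E}(W(S)\mid\mathbf{c})\ge p^{-}\sum_{k=1}^{K} n^2(r_k^d(S))^2$ must be weakened to a sum over $k=1,\dots,K-1$ (since $p_{KK}$ need not be $\ge p^{-}$), and the hypothesis $|G_K|^2/K=o(n^{2-2\delta})$ combined with $S\in\Gamma_\delta$ guarantees that this truncated sum still satisfies $\sum_{k=1}^{K-1}(r_k^d(S))^2\gtrsim |S|^2/(Kn^2)\gtrsim n^{-2\delta}$, hence $\mathbb{E}(W(S)\mid\mathbf{c})\gtrsim n^{1+2(\alpha-\delta)}\log n$. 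Once Lemma~\ref{cd} survives, the remainder of the argument via $\hat\psi$, the function $f$, and Lemma~\ref{Ceta} proceeds as before.

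Your route could also be made to work, but it is heavier: the step where you claim $W(S)/V(S)$ does not exceed the pure-community ratio up to $o(1/n^\gamma)$ is precisely the delicate exponent-tracking you flag at the end, and it effectively duplicates work already absorbed into the uniform bound $|\psi-\hat\psi|\lesssim n^{\delta-\alpha}$. The paper's approach sidesteps this entirely by never comparing $S$ to $S'$; the size condition on $G_K$ is invoked only once, to rescue a single Cauchy--Schwarz step inside the concentration lemma. Your observation that $|S_K|=o(|S|)$ for $S\in\Gamma_\delta$ is exactly the right insight---you just apply it at a later (and therefore more painful) stage of the argument than necessary.
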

This theorem says that as long as the outlier community is not too large, the first extracted community will be very close to the community with the largest $\rho^d_k$.

\section{Simulation Study}
\label{sec:simulation}
In this section, we perform simulation study to compare the methods proposed in this paper with currently available methods. For the algorithm developed in this paper, we consider two versions of the algorithm, with or without the permutation test. This helps us to see the effect of the permutation test on removing false communities. We call these algorithms L0Lap (without the permutation test), L0LapT (with the permutation test). The methods that we compare with include Newman's modularity \citep{newman2004fast}, SCORE \citep{jin2015fast}, oPCA \citep{chung1997spectral}, nPCA \citep{shi2000normalized} and OSLOM \citep{lancichinetti2011finding}. For Newman's modularity, we use the C++ implementation available at \url{http://cs.unm.edu/~aaron/research/fastmodularity.htm}. For OSLOM, we use the C++ implementation from \url{http://www.oslom.org/software.htm}. For the other three methods, we implement the algorithms using Matlab according to their respective descriptions. Since SCORE, oPCA and nPCA requires a known community number, we provide the true community number to these algorithms in the simulation. The simulation study includes both networks without outliers and networks with outliers.

\nop{
\begin{enumerate}[(1)]
	\item Newman [\textbf{RX: need a reference here; Is this the modularity method?}]: We use the C++ implementation available at\\
	\url{http://cs.unm.edu/~aaron/research/fastmodularity.htm}.
	\item SCORE:  We use a Matlab implementation of the SCORE algorithm according to \citet{jin2015fast}.
	\item oPCA: We use the Matlab implementation of the ordinary spectral clustering algorithm according to \citet{chung1997spectral}.
	\item nPCA: We use the Matlab implementation of the normalized spectral clustering algorithm according to \citet{shi2000normalized}.
	\item OSLOM: We use the C++ implementation available at \\
	\url{http://www.oslom.org/software.htm}.
\end{enumerate}
}

\begin{figure}[H]
	\begin{minipage}[t]{0.5\linewidth}
		\centering
		\includegraphics[width=0.9\textwidth,natwidth=510,natheight=542]{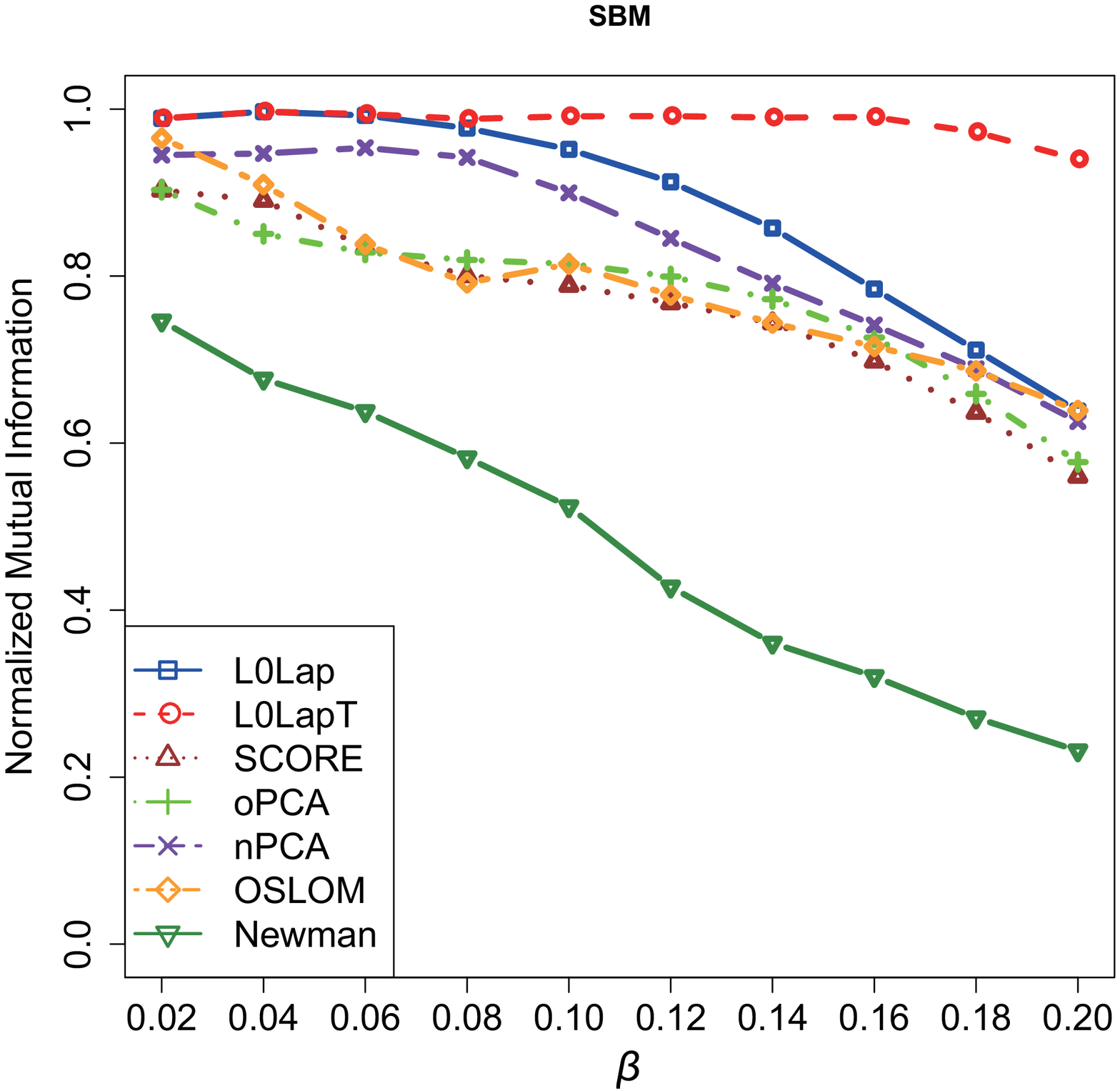}
	\end{minipage}
	\begin{minipage}[t]{0.5\linewidth}
		\centering
		\includegraphics[width=0.9\textwidth,natwidth=510,natheight=542]{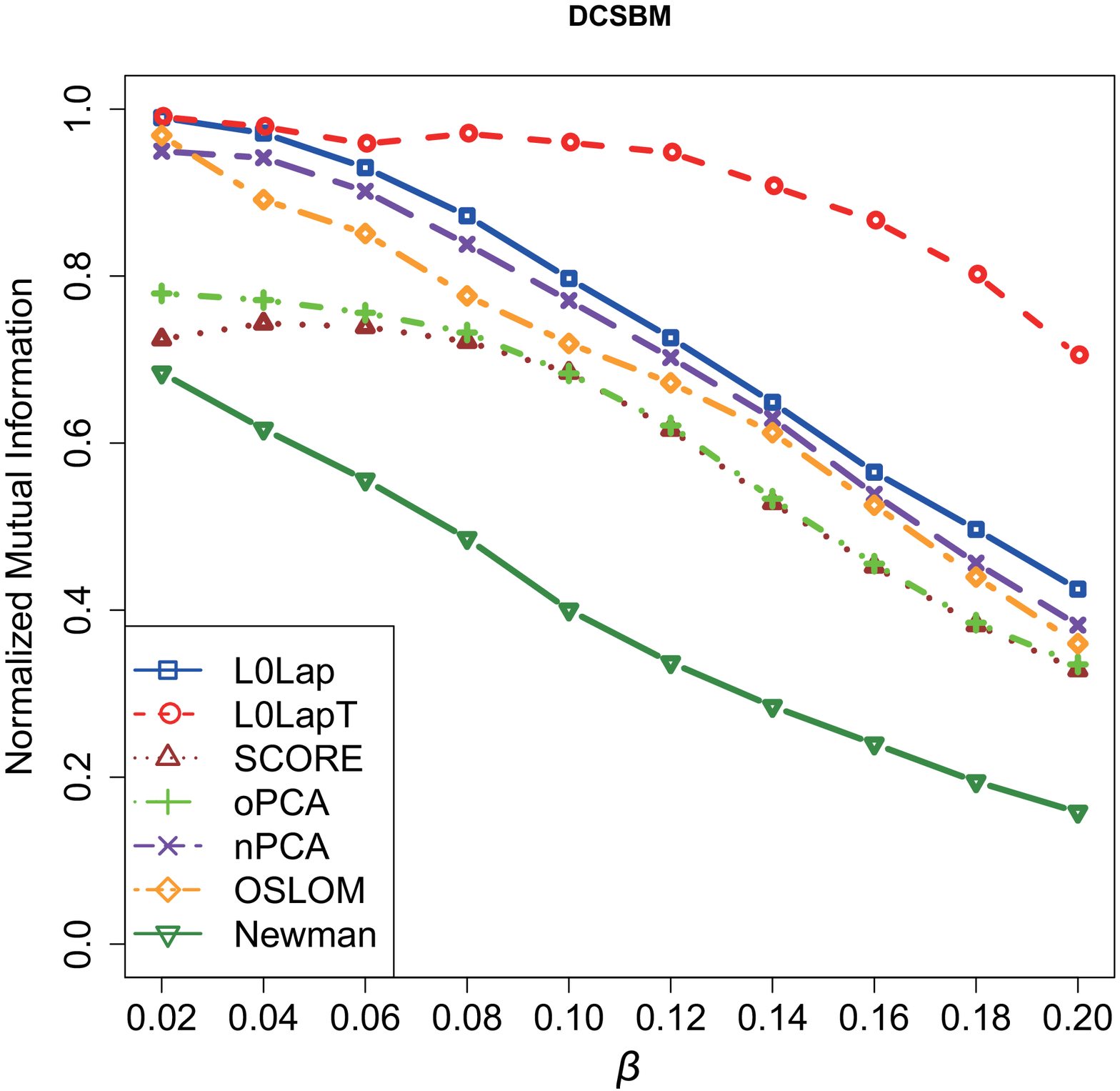}
	\end{minipage}
	\begin{minipage}[t]{0.5\linewidth}
		\centering
		\includegraphics[width=0.9\textwidth,natwidth=510,natheight=542]{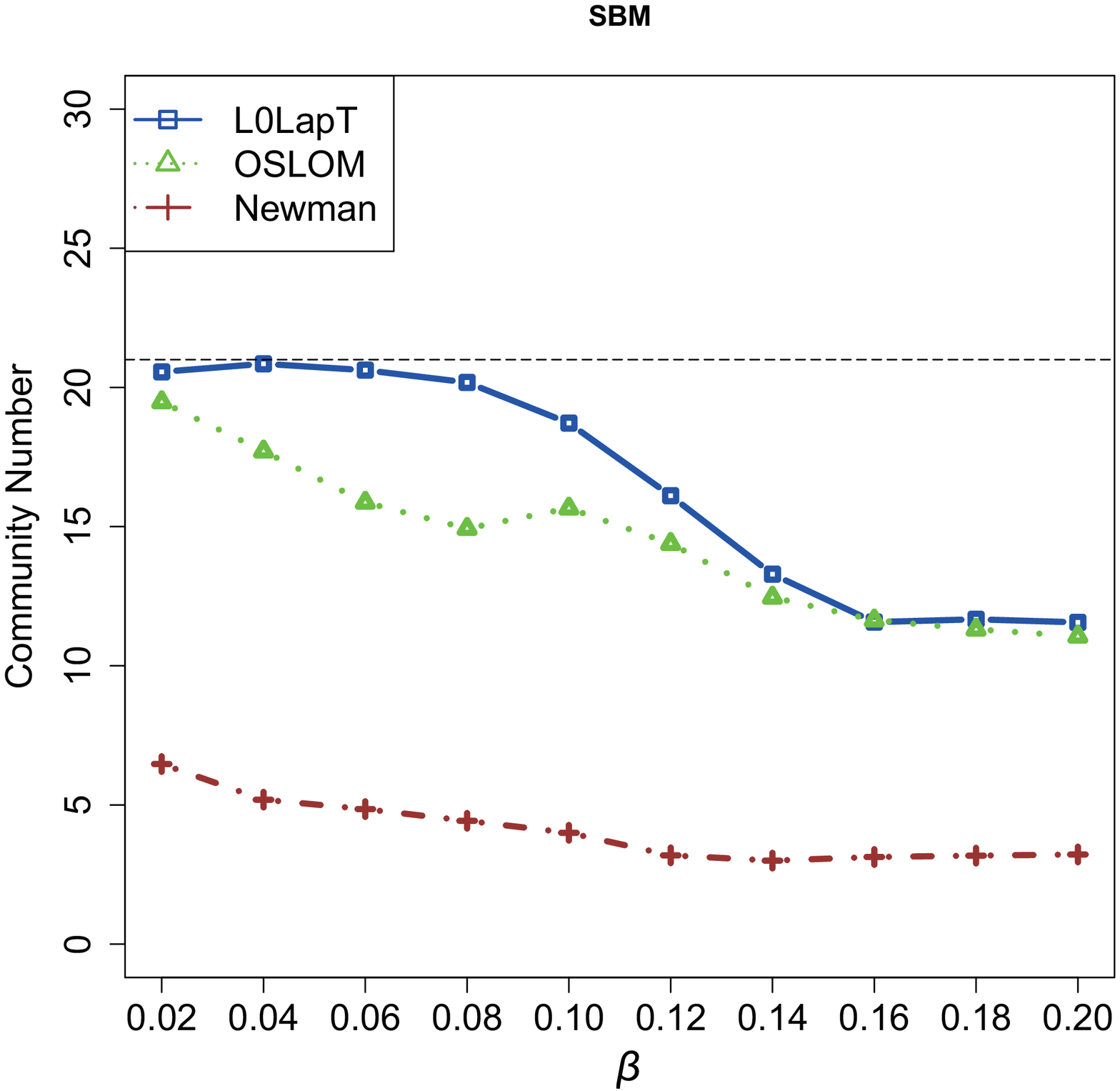}
	\end{minipage}
	\begin{minipage}[t]{0.5\linewidth}
		\centering
		\includegraphics[width=0.9\textwidth,natwidth=510,natheight=542]{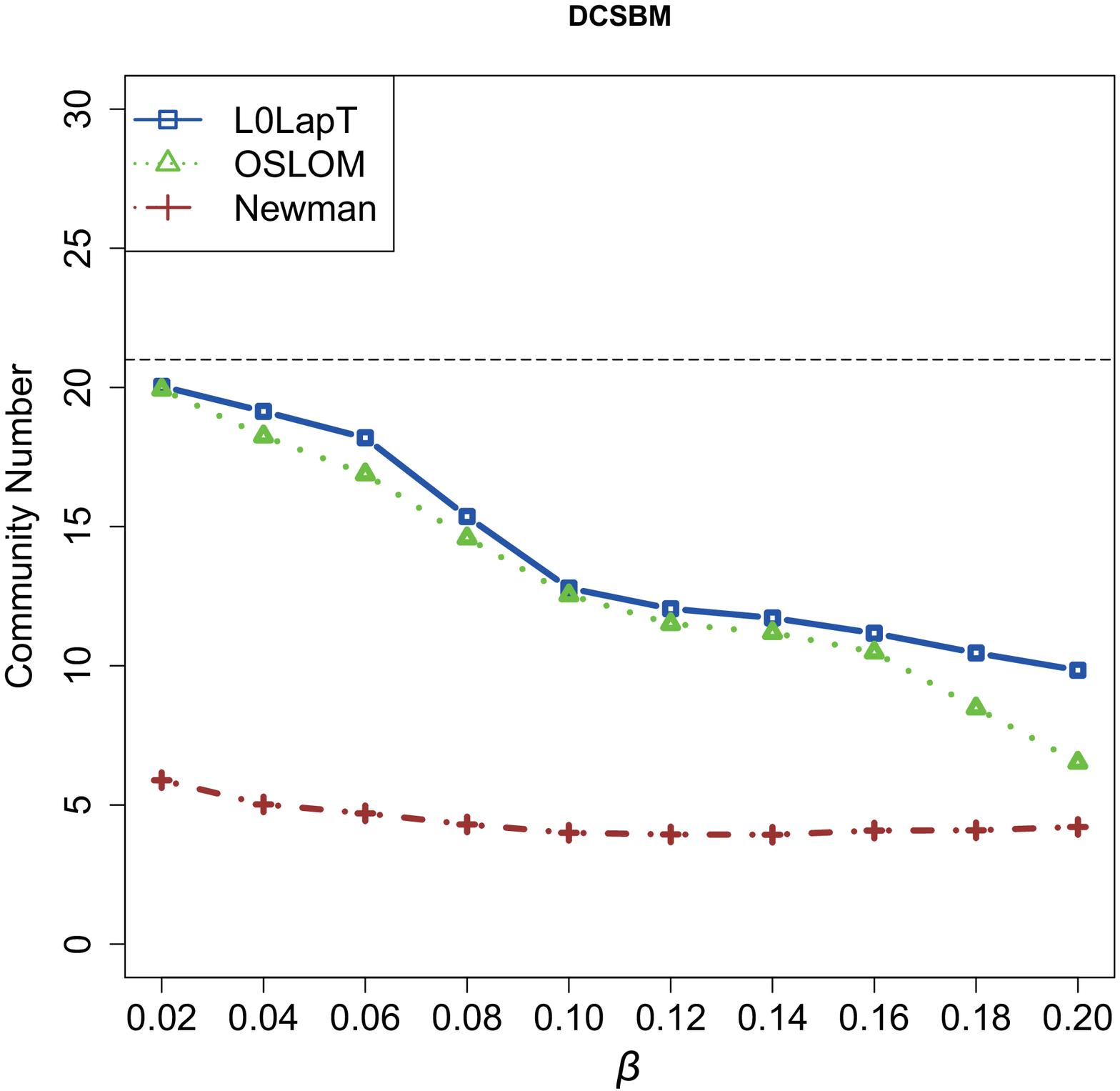}
	\end{minipage}	
	\caption{\label{cn:1}Under the SBM and DCSBM, the mean NMI (top panel) and the mean detected community number (bottom panel) over 100 simulated networks with varying out-in-ratio parameter $\beta$. The degree parameter $\Lambda$ is fixed as 50. }
\end{figure}

\subsection{Simulation under DCSBM}
\label{sec:simu:DCSBM}
In this subsection, we simulate networks under the SBM and DCSBM. All simulated networks have $n=1,000$ nodes and $K=21$ communities of different sizes. Among the 21 communities, 5 of them have 100 nodes, 6 have 50 nodes and 10 have 20 nodes. Let $\pi=\left(\pi_{1},\pi_{2},...,\pi_{21}\right)$
 be the proportion of nodes in each community (e.g. $\pi_1=\cdots=\pi_5=100/1000=0.1$, $\pi_6=\cdots=\pi_{11}=50/1000=0.05$). Conditional on the labels, the edges between nodes are generated as independent Bernoulli variables with probabilities proportional to $\theta_{i}\theta_{j}P_{ij}$. For the SBM, the parameters $\theta_{j}$ are all equal to 1. For the DCSBM, $\theta_{j}$'s are drawn independently from  $U~[0.5,1]$.

Similar to \citet{amini2013pseudo}, the connecting matrix $P$ is constructed depending on an ``out-in-ratio" parameter $\beta$ \citep{decelle2011asymptotic}. Given a $\beta$, we set the diagonal elements of a $21\times 21$ matrix $P^{(0)}$ as $\beta^{-1}$ and set all off-diagonal elements as 1. Then, given an overall expected network degree $\Lambda$, we rescale $P^{(0)}$ to give the final $P$:
\begin{equation*}
P=\frac{\Lambda}{(n-1)(\pi^{T}P^{(0)}\pi)(\mathbb{E}\Theta)^{2}}P^{(0)}.
\end{equation*}
The normalized mutual information (NMI) \citep{yao2003information} is often used to measure the concordance between the detected and true communities. Given two community labels $c$ and $e$, assume $c$ has $K_{1}$ communities and $e$ has $K_{2}$ communities. Define the confusion matrix $M_{K_{1}\times K_{2}}$ by $M_{ij}=\sum_{k}\mathbf{1}\{c_{k}=i,e_{k}=j\}$, denote its row and column sums $M_{i+}$ and $M_{+j}$. The NMI is defined by $\mbox{NMI}(c,e)=-\left(\sum_{i,j}M_{ij}\log M_{ij}\right)^{-1}\sum_{i,j}M_{ij}\log \frac{M_{ij}}{M_{i+}M_{+j}}$. For L0LapT and OSLOM, since there will be unclassified nodes, we only consider nodes that are assigned with a community label when calculating the NMI.

We first fix $\Lambda=50$ and vary the out-in-ratio parameter $\beta$ from 0.02 to 0.2. For each $\beta$, we generate 100 networks and compare the mean NMI of each algorithm over these 100 networks (Figure \ref{cn:1}). For all algorithms, the NMIs tend to decrease as the out-in-ratio parameter $\beta$ increases. We clearly see that our algorithm achieves the highest NMI compared with other methods.
As expected, after we apply the permutation test, the NMI can also be significantly improved. This is because the permutation test successfully removes small false communities. Especially, when $\beta$ is large, the test is more effective in terms of improving the NMI. For example, under the SBM, when $\beta=0.2$, the NMI of L0Lap is similar to that of nPCA and OLSOM , but after applying the permutation test, the NMI of L0Lap becomes close to 1. Furthermore, since nodes in the DCSBM are heterogeneous, as expected, all methods perform better in the SBM than in the DCSBM. We also compare the detected community numbers given by L0LapT, Newman and OSLOM (Figure \ref{cn:1} bottom panel). Our algorithm usually gives better community number estimation than the other two algorithms, although when $\beta$ is large all methods underestimate the community number. The community number chosen by Newman is much smaller than the true number. This is probably due to the known resolution limit of modularity \citep{fortunato2007resolution}.

We then fix $\beta=0.1$ and vary network degree $\Lambda$ from 2 to 100 to compare different algorithms. The mean NMI of each algorithm is shown in Figure \ref{cn:5}. Again, we see that our algorithm generally performs better than other algorithms. When $\Lambda$ is very small, since OSLOM and Newman often divide networks to many small connected subsets, they tend to have much larger NMIs than other methods, but also detect much more communities than the truth.

\begin{figure}[H]
	\begin{minipage}[t]{0.5\linewidth}
		\centering
		\includegraphics[width=0.9\textwidth,natwidth=510,natheight=542]{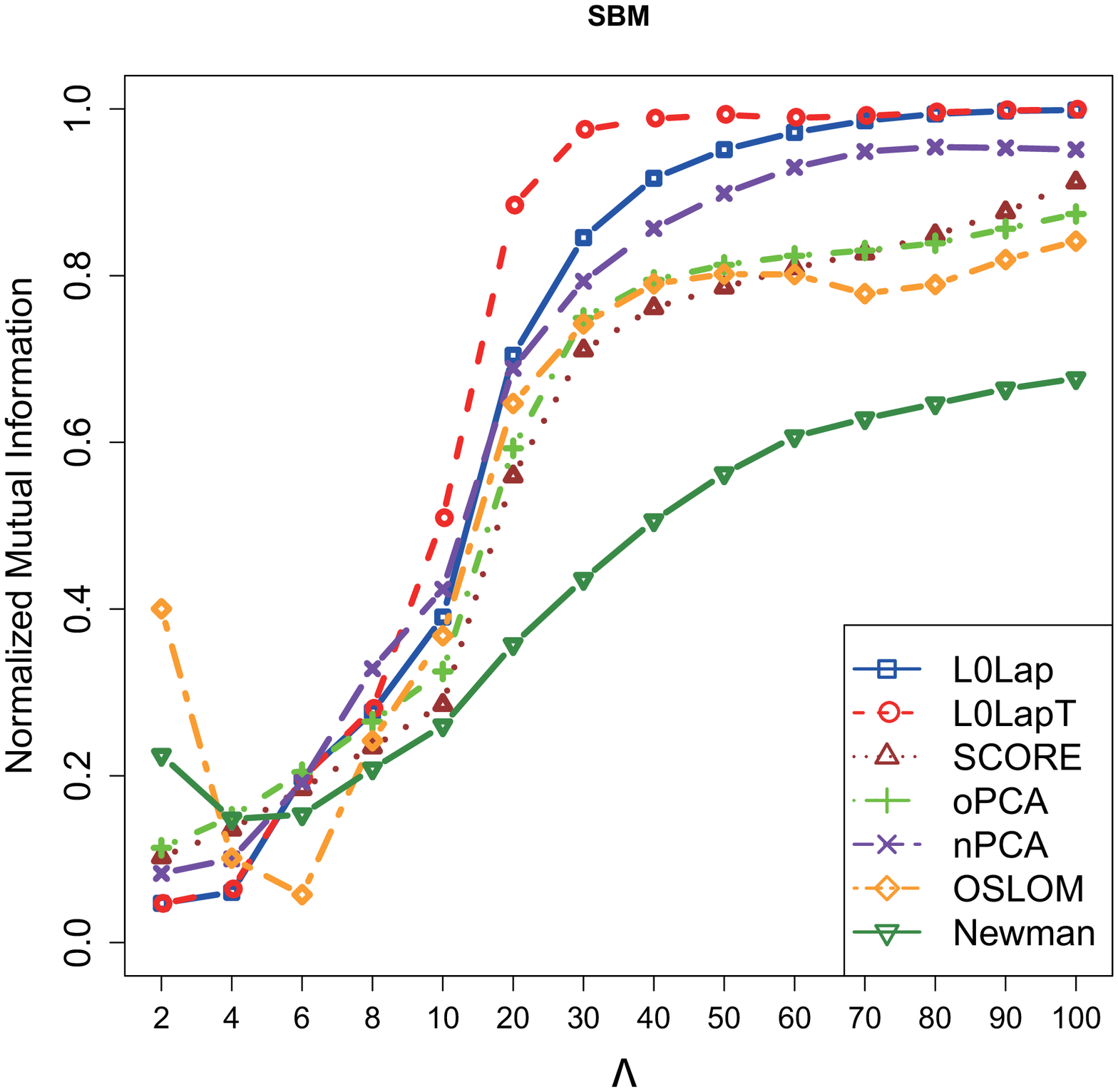}
	\end{minipage}
	\begin{minipage}[t]{0.5\linewidth}
		\centering
		\includegraphics[width=0.9\textwidth,natwidth=510,natheight=542]{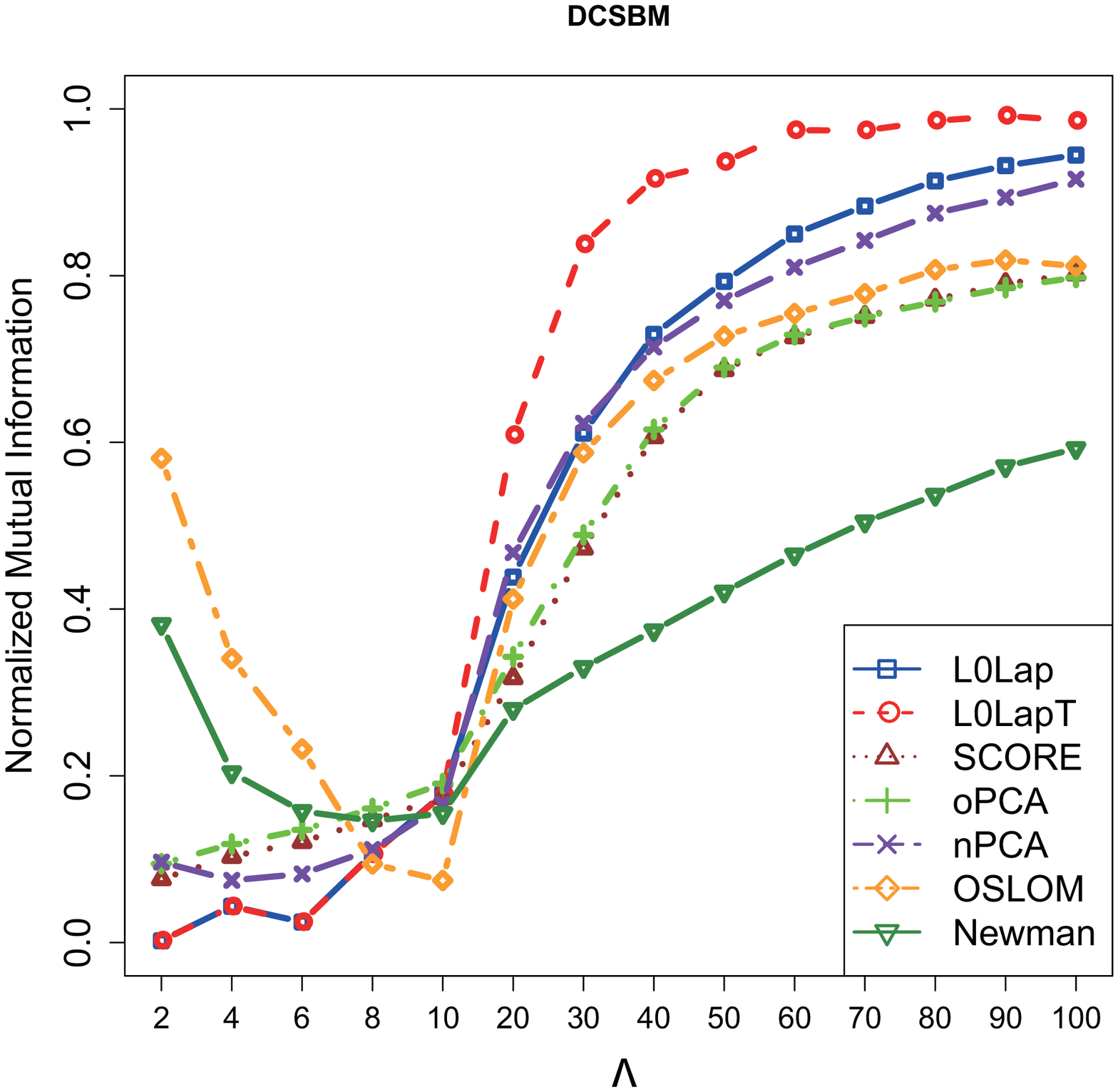}
	\end{minipage}
	
		\begin{minipage}[t]{0.5\linewidth}
		\centering
		\includegraphics[width=0.9\textwidth,natwidth=510,natheight=542]{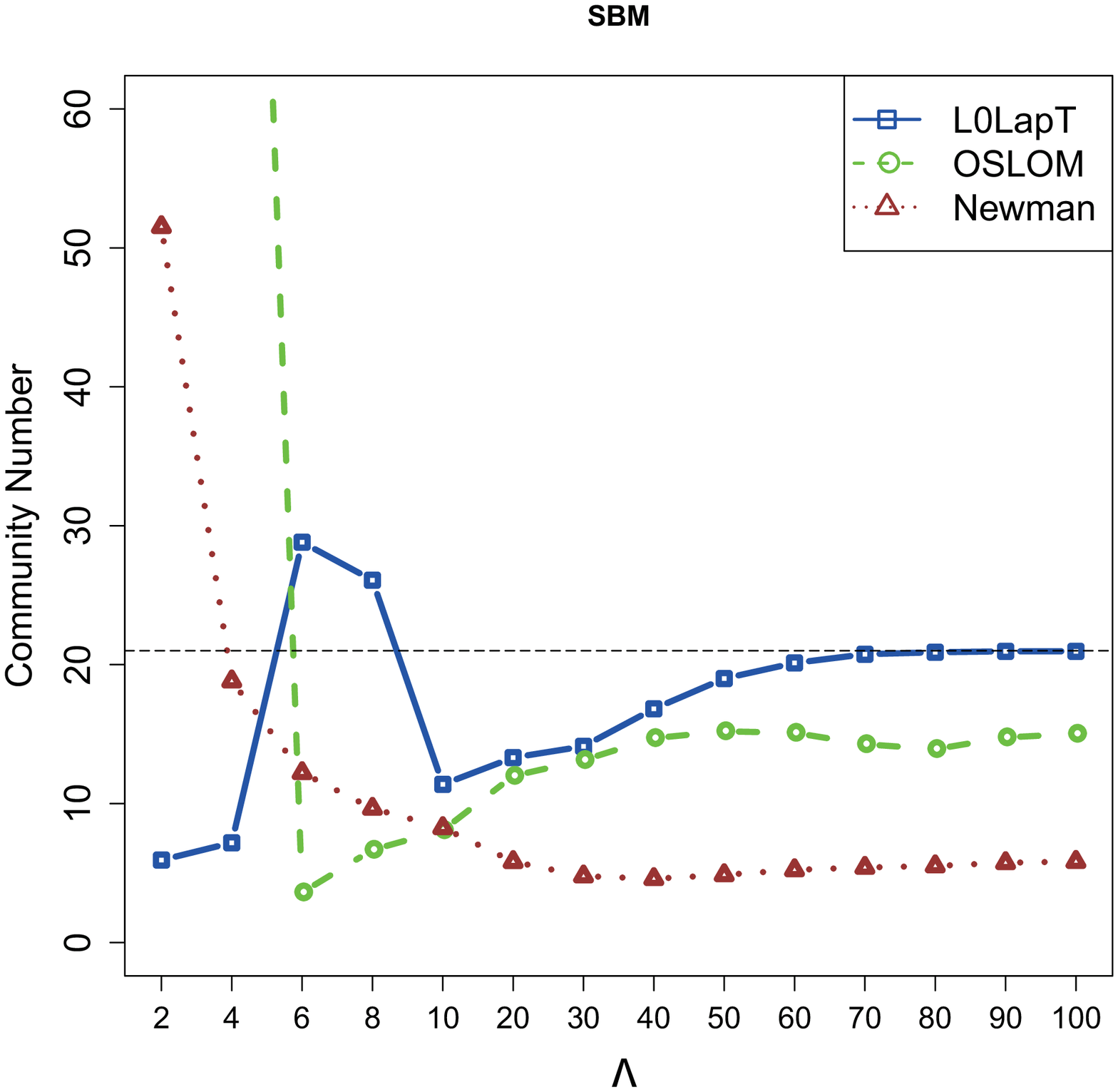}
	\end{minipage}
	\begin{minipage}[t]{0.5\linewidth}
		\centering
		\includegraphics[width=0.9\textwidth,natwidth=510,natheight=542]{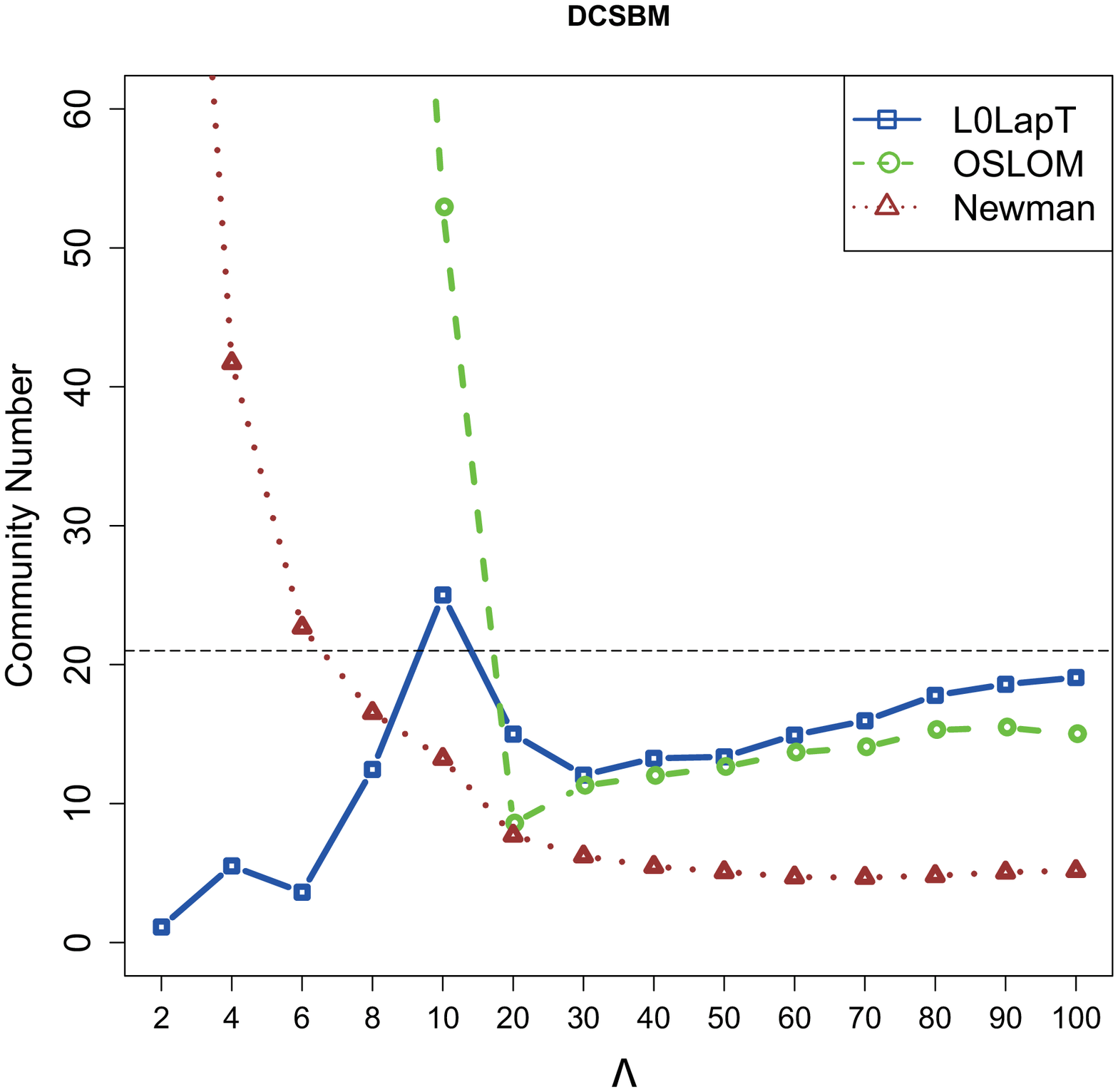}
	\end{minipage}
	\caption{\label{cn:5}Under the SBM and DCSBM, the mean NMI (top panel) and the mean detected community number (bottom panel) over 100 simulated networks with varying network degree $\Lambda$. The out-in-ratio $\beta$ is fixed as 0.1.}
\end{figure}

\subsection{Simulation under DCSBM with Outliers}

In this section, we compare the performance of each algorithm under the DCSBM with outliers. The simulated networks have $n=1,000$ nodes and $K=16$ communities of different sizes. The 16 communities are categorized into three groups according to their size, that is, 5 communities have 100 nodes, 6 have 50 nodes, and 5 have 20 nodes. The remaining 100 nodes are viewed as outliers which connect with any other nodes with the probability equal to the connecting probability between communities. The connecting probability matrix is generated similar to Section \ref{sec:simu:DCSBM}. Here, we fix $\Lambda=50$ and vary $\beta$ from 0.02 to 0.2. For nPCA, SCORE and oPCA, we set the community number as 17 in this simulation (16 communities and 1 outlier community). To compute the reasonable NMIs, the outlier nodes are viewed as in the 17th community and calculated similarly as before. Figure \ref{cn:3} shows the NMIs and the number of detected communities of these algorithms. Because there are outliers, even when $\beta$ is very small, there is still an nonignorable gap between the NMIs and its upper bound 1. However, after applying the permutation test, the NMI of L0Lap is significantly improved. Furthermore, the community number found by L0LapT is much closer to the truth compared to OSLOM and Newman.

\begin{figure}[H]
	\begin{minipage}[t]{0.5\linewidth}
		\centering
		\includegraphics[width=0.9\textwidth,natwidth=510,natheight=542]{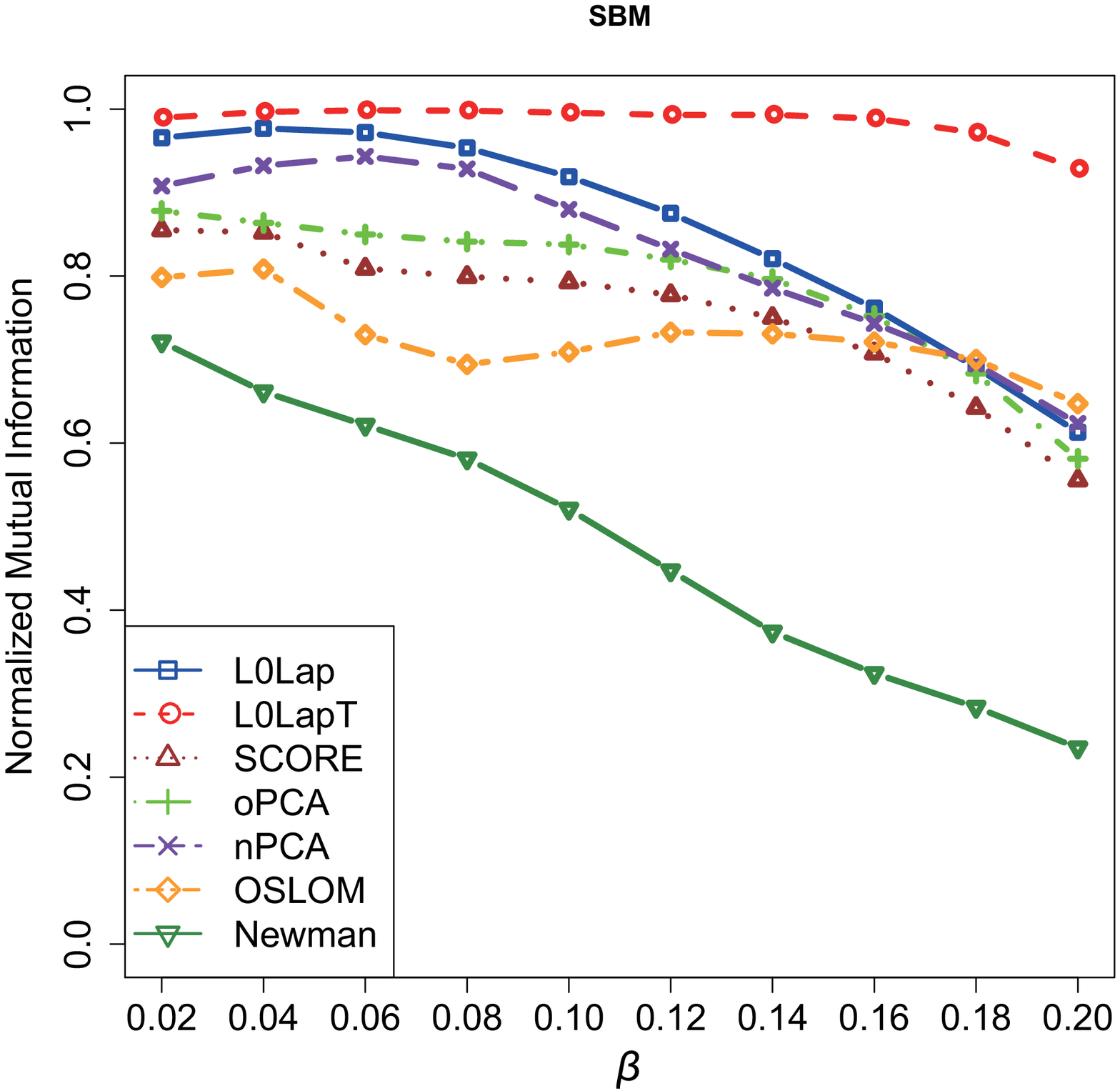}
	\end{minipage}
	\begin{minipage}[t]{0.5\linewidth}
		\centering
		\includegraphics[width=0.9\textwidth,natwidth=510,natheight=542]{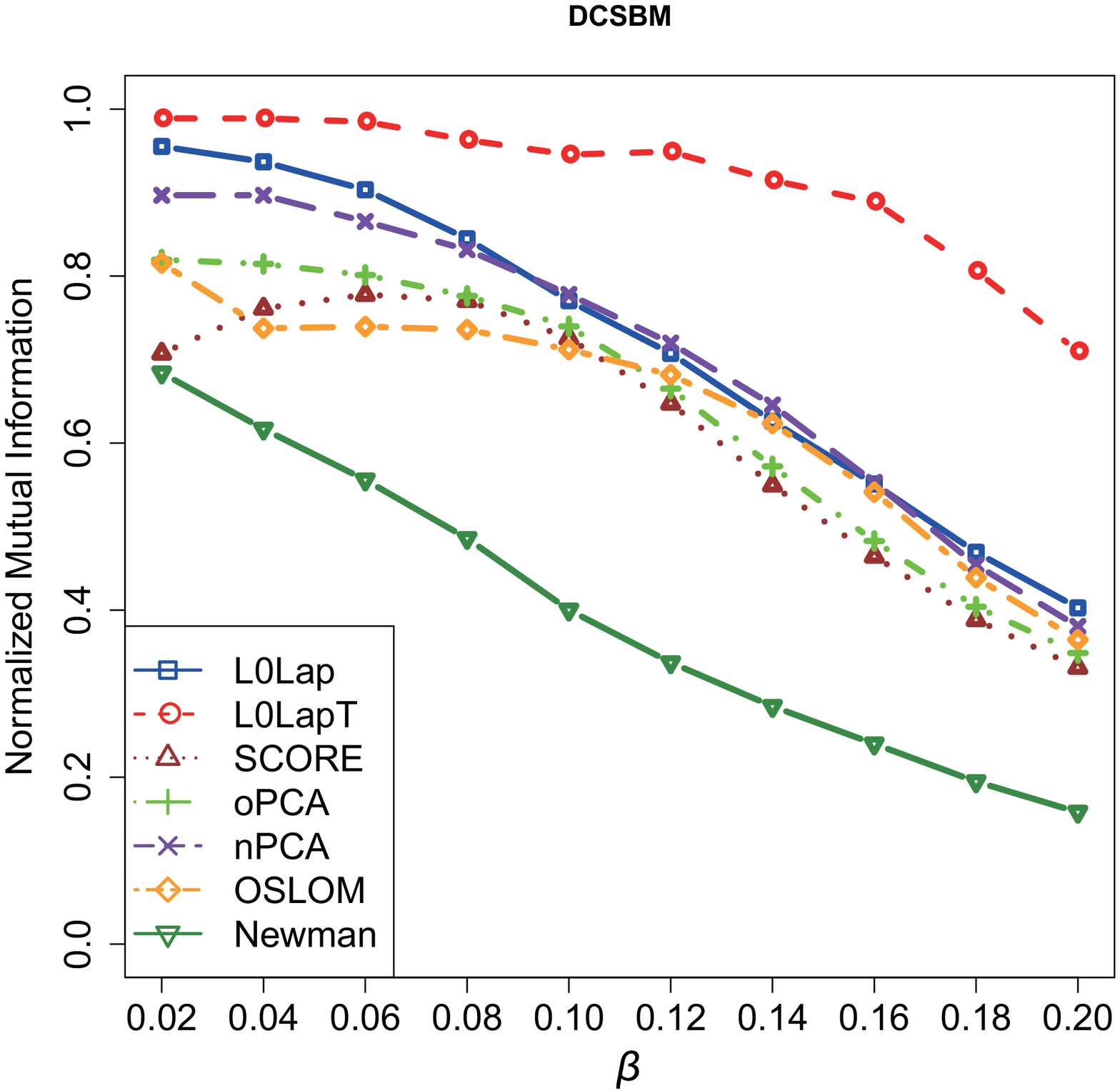}
	\end{minipage}
	\begin{minipage}[t]{0.5\linewidth}
		\centering
		\includegraphics[width=0.9\textwidth,natwidth=510,natheight=542]{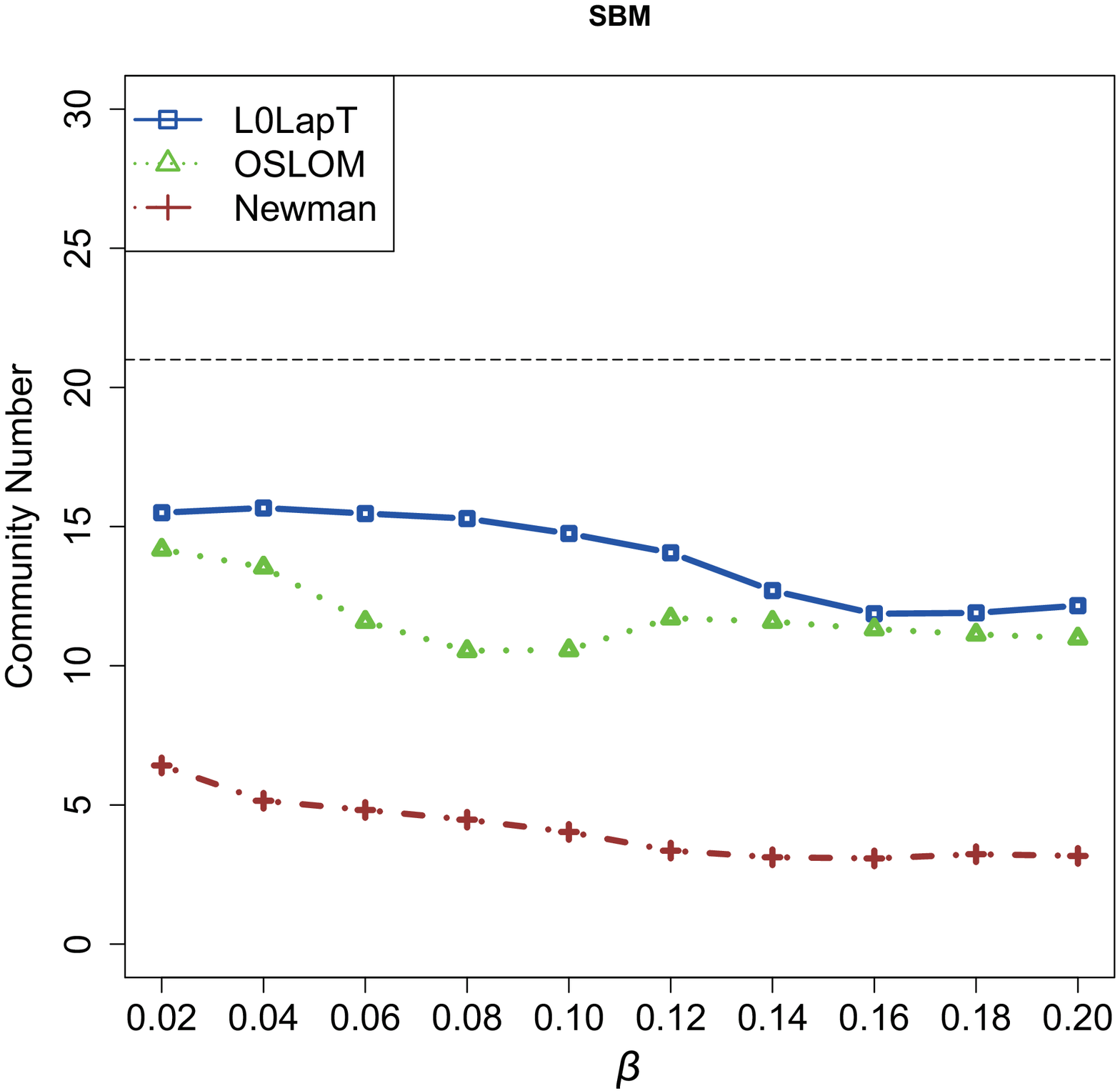}
	\end{minipage}
	\begin{minipage}[t]{0.5\linewidth}
		\centering
		\includegraphics[width=0.9\textwidth,natwidth=510,natheight=542]{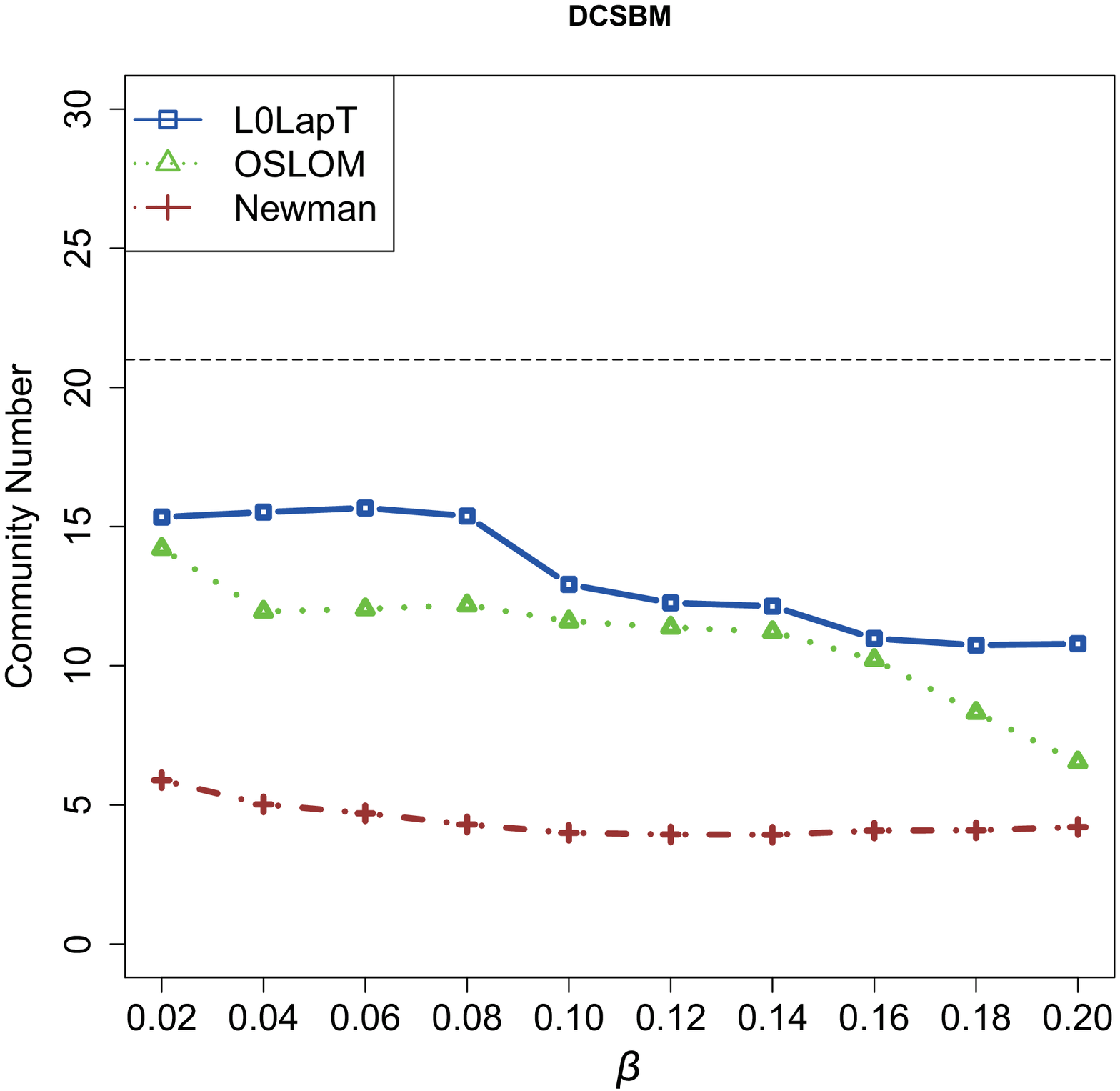}
	\end{minipage}
	\caption{\label{cn:3}Under the SBM and DCSBM with outliers, the mean NMI (top panel) and the mean detected community number (bottom panel) over 100 simulated networks with varying out-in-ratio parameter $\beta$. The degree parameter $\Lambda$ is fixed as 50.}
\end{figure}

\nop{
\subsection{Changing $\Lambda$}

In this section, our simulated network has $n=1,000$ nodes and $K=21$ communities of different sizes. The 21 communities are categorized into three groups according to their size, that is, five communities have 100 nodes, six have 50 nodes, and ten have 20 nodes. And we fix $\beta=0.1$ and change $\Lambda$ form 10 to 100. Results are further summarized by the sample mean of the 100 NMI.

It is clear to see our methods have better performance with big $\Lambda$, while about the same performance with small $\Lambda$. Newman and OSLOM have good performance when $\Lambda$ is very small, because they separate all the unicom subsets. Our methods will have similar result if we set a better initial.
}

\section{Real Data Analysis}
\label{sec:realData}
We consider two real data sets in this section, the college football network data \citep{xu2007scan} and the protein-protein network data in yeast \citep{yu2008high}.
\nop{
\subsection{Karate Club Data}
The karate club network \cite{zachary1977information} consisting of 34 nodes is often used to compare different community detection. The network is divided to two parts following a disagreement between an instructor(node 0) and an administrator(node 33), and these two groups are used as the ground truth communities in benchmark studies. The results of L0LapT, Newman and nPCA are given by Figure \ref{cn:2}. oPCA with $K=2$ almost partitions this network into the true factions. L0LapT finds three communities: the biggest community is the same with one true community except node 20, the other two communities are subsets of the other true community. In fact, if we remove node 1, we can find that the red community has no connection with the green community. It implies L0Lap can find more details than the ground truth. The remaining four grey nodes are viewed as outliers. It is reasonable since their degrees are all small. Newman also finds three communities, but it is hard to get the relationship between the red community and the green community. And the result of nPCA with $K=3$ is even worse.

\begin{figure}[H]
\centering
\includegraphics[width=0.95\textwidth]{karate.eps}
\caption{\label{cn:2}Results for the karate club network: (A) L0LapT, (B) Newman, (C) nPCA with $K=3$, (D) nPCA with $K=2$. The two shapes represent two communities by the ground truth, different colours represent different communities found by each method.}
\end{figure}
}

\subsection{College Football Data}

The college football network data is the 2006 National Collegiate Athletic Association (NCAA) Football Bowl Subdivision (FBS) schedule \citep{xu2007scan}. The data set consists of 115 schools belonging to 11 conferences in FBS, 4 independent schools and 61 lower division schools. Schools within conferences play more often against each other, so the 11 conferences are 11 communities. The four independent schools are hubs: they play against many schools in different conferences but do not belong to any conferences. The 61 lower division schools connect loosely with other nodes and are outliers of the network. We apply all methods considered in the simulation study to this data set. The algorithms L0Lap, L0LapT, OSLOM and Newman can automatically estimate the community number. For SCORE, oPCA and nPCA, we provide them with the true community number 12, including 11 communities and one outlier community. The outlier community includes both the hub nodes and the outlier nodes. Table \ref{table:1} shows the NMI and the detected community number (CN) of each algorithm. This clearly show that L0LapT have the largest NMI compared with other methods. oPCA also works well: Its NMI is 0.925 and ranks the second best among all algorithms. In terms of outlier identification, although OSLOM is designed to be able to identify outliers, it fails to report any outlier for this data. In comparison, L0LapT identifies 80 nodes as outliers and 62 of them are true outliers. OSLOM gives the most accurate estimate of the community number.

\begin{table}[H]
\label{simulation}
\caption{\label{table:1}The performance of all methods on the college football network data. CN is the detected or the provided community number. We set the community number as 12 for SCORE, oPCA and nPCA.}

\begin{center}
\begin{tabular}{c|ccccccc}
\hline

\multicolumn{ 1}{c}{} &      \multicolumn{ 1}{c}{L0Lap}  &      \multicolumn{ 1}{c}{L0LapT} &               \multicolumn{ 1}{c}{SCORE} &      \multicolumn{ 1}{c}{oPCA}  &      \multicolumn{ 1}{c}{nPCA} &      \multicolumn{ 1}{c}{OSLOM}  &      \multicolumn{ 1}{c}{Newman} \\
\hline

\multicolumn{ 1}{c}{NMI} &       0.856  &     0.985  &  0.674  &  0.925  &  0.640  &  0.681  &  0.550 \\ \hline

\multicolumn{ 1}{c}{CN} &       22  &     10  &     12 &  12  &  12  &  11 &  6  \\

\hline
\end{tabular}
\end{center}
\end{table}

To look into more details of the detected communities of each algorithm, we examine the pairwise overlaps between detected communities with true communities. Specifically, given a detected community $C^{D}_i$ and a true community $C^{T}_j$, we calculate an overlapping score between these two communities by $o_{ij}=|C^{D}_i\bigcap C^{T}_j|\big/|C^{D}_i\bigcup C^{T}_j|$. Thus, we get a matrix $O=(o_{ij})_{CN\times 12}$ for each algorithm. Figure \ref{cn:6} shows heat maps of these matrices for L0LapT, oPCA, Newman and OSLOM. The true community 12 in Figure \ref{cn:6} is the outlier community. All communities identified by L0LapT are highly similar to or exactly the same as the true communities. This demonstrates that L0LapT can give high quality communities. However, L0LapT fails to detect the community 11 and nodes in this community are filtered as outliers. In comparison, although OSLOM gives the best estimate of the community number, the quality of its detected communities are not as high as those given by L0LapT. Most of the ``diagonal" overlapping scores for OSLOM are less than 0.71 and the largest overlapping score is only 0.86, showing that many detected communities contain substantial amount of nodes not belonging to these communites. oPCA performs well for most communities, but members in community 2 and community 11 are mixed up. Newman performs poorly in this data. Most of its detected communities are far away from true communities.
\begin{figure}[H]
\centering
\includegraphics[width=1\textwidth]{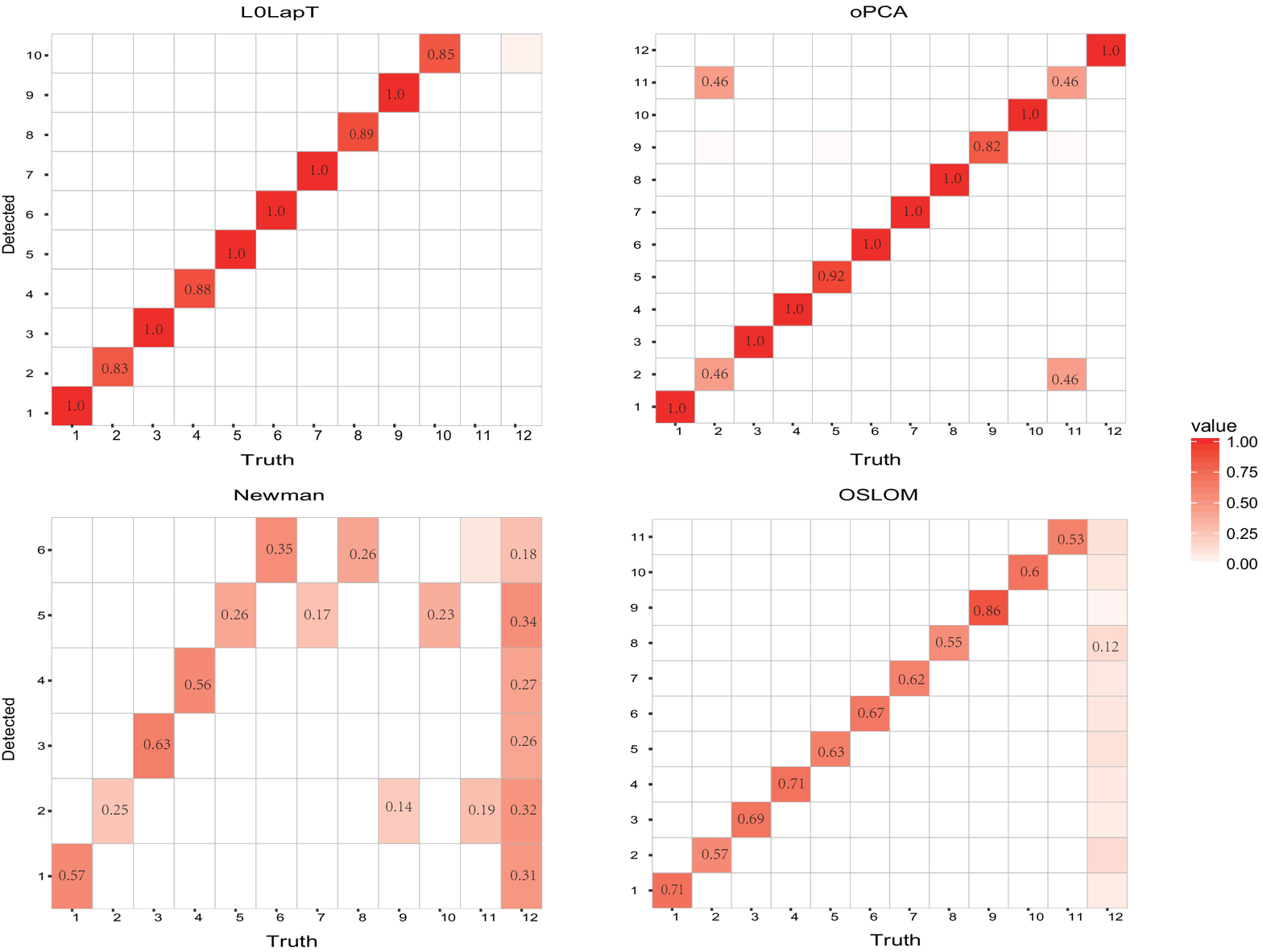}
\caption{\label{cn:6} Heatmap of the overlapping scores $o_{ij}$ between the detected communities with the true communities for the college football network data. The figure only shows results for L0LapT, oPCA, Newman and OSLOM. Similar plots for other algorithms are shown in the Supplementary Figure. The numbers in the figure are the overlapping scores $o_{ij}$ with $o_{ij}>0.1$.}
\end{figure}

\subsection{Protein-protein interaction data in yeast}

In this section, we consider a protein-protein interaction network data in yeast \citep{yu2008high}. After removing isolated nodes, we get a network with 1,540 nodes and 7,123 edges. Different proteins often interact with each other to achieve one biological function. The communities of the PPI network should then represent different cellular functions.  We apply all methods considered in the simulation study to this network to find communities in this PPI network. L0LapT finds 22 communities with their sizes ranging from 8 to 138. OSLOM finds 114 communities ranging from 3 to 103. Newman finds 202 communities ranging from 2 to 193. For SCORE, oPCA and nPCA, since the number of communities is unknown, we set the community number as 100, which roughly is the average number of communities detected by L0LapT, OSLOM and Newman. Finally, the community sizes given by SCORE, oPCA and nPCA ranges from 1 to 1178, from 1 to 738 and 1 to 1069, respectively. We further filter out communities with $\leq 5$ nodes, since these are unlikely to be true communities.

Since we do not know the true community structure, to evaluate the quality of the partition of this yeast network, we instead use gene oncology (GO) enrichment analysis to compare different algorithms. Since communities of the PPI network correspond to different cellular functions, the detected communities should be enriched with known GO terms. We download yeast gene GO annotation database from \url{http://www.yeastgenome.org/}. For enrichment analysis, we focus only on GO terms with at least 10 annotated genes. For each community, we calculate a list of p-values with every GO term by Fisher's exact test. If the detected communities are biological meaningful, the communities should be highly significant with a number of GO terms. After $\log_{10}$ transformation of these p-values, define ${\rm ratio}_{t}={|-\log_{10}\mbox{p-value}>t|}\big/{|-\log_{10}\mbox{p-value}>0|}$ for a threshold $t$. This ratio could be viewed as an indicator of biological relatedness of the detected communities. At the same cutoff $t$, larger ratio value should correspond to more biologically meaningful communities. The ratio curves of these methods are shown in Figure \ref{GO enrichment1}, left panel. We see that the curve of L0LapT is largely above other curves. However, when $t$ is large, it is hard to see the difference. Therefore, we further consider only p-values less than 0.1 and define ${\rm ratio}^r_{t}={|\{-\log_{10}\mbox{p-value}>t\}|}\big/{|\{-\log_{10}\mbox{p-value}>1\}|}$ for any threshold $t\geq 1$. The new ratio curves  are shown in Figure \ref{GO enrichment1}, right panel. We can now clearly see that the curve of L0LapT curves is always above other methods.
\begin{figure}[H]
	\begin{minipage}[t]{0.5\linewidth}
		\centering
		\includegraphics[width=1.1\textwidth,natwidth=610,natheight=642]{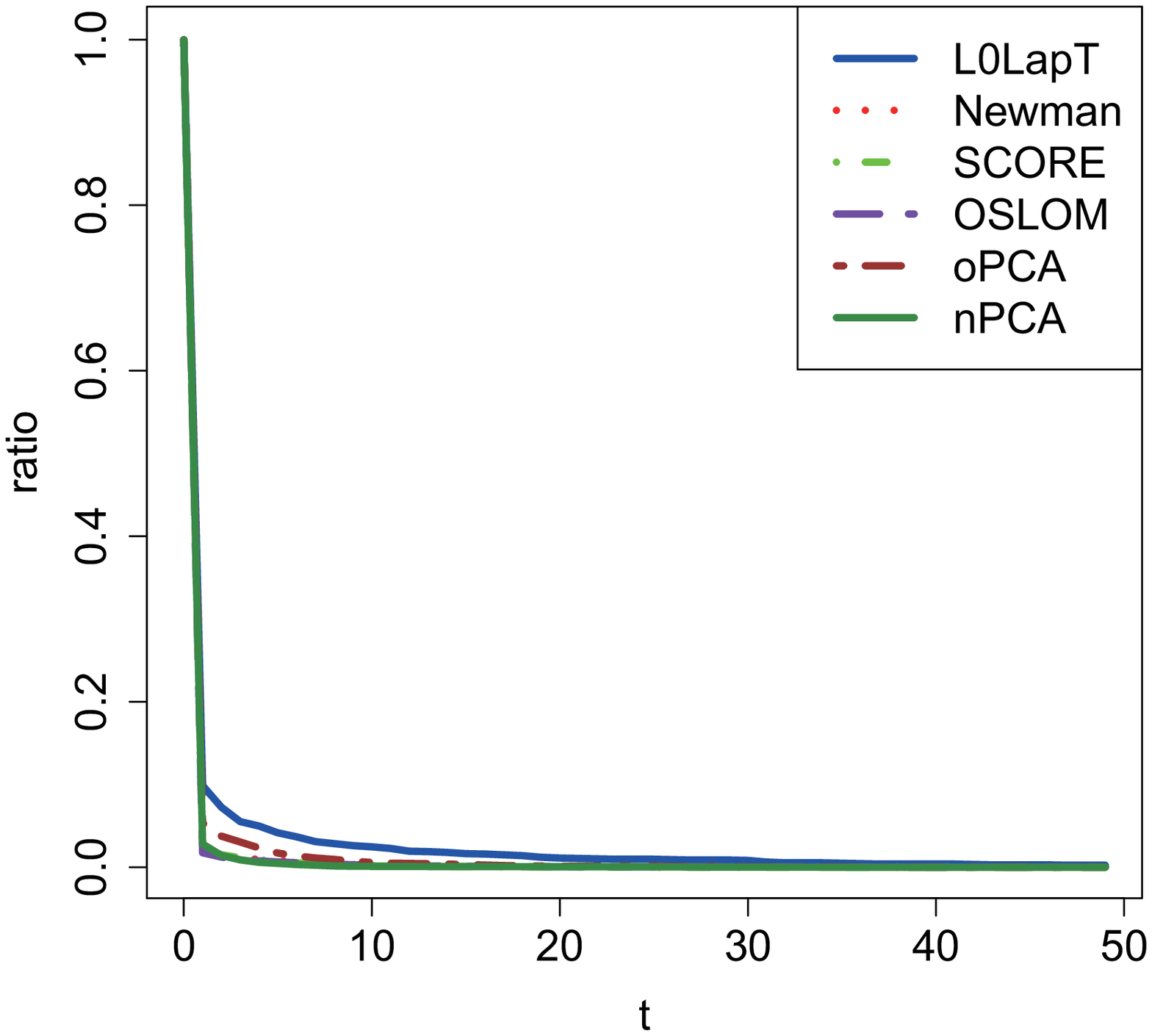}
	\end{minipage}
	\begin{minipage}[t]{0.5\linewidth}
		\centering
		\includegraphics[width=1.1\textwidth,natwidth=610,natheight=642]{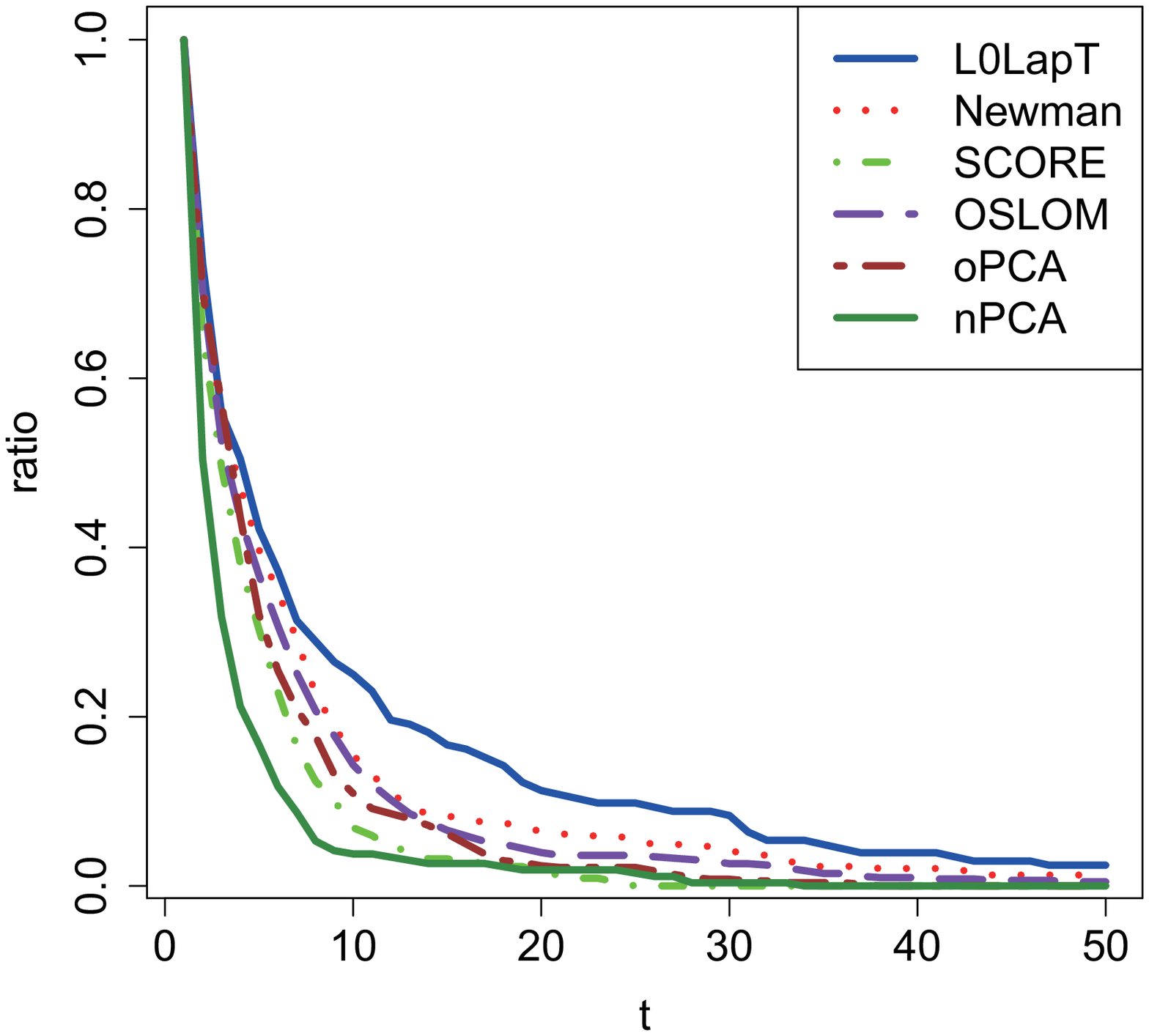}
	\end{minipage}
	\caption{\label{GO enrichment1}GO enrichment analysis.}
\end{figure}

\section{Conclusion and Discussion}
In this paper, we propose a $L_0$-penalized Laplacian for community detection. This method does not require information about the community number and it can detect communities in networks with outliers. We prove a consistency result for DCSBM with or without outliers. Simulation studies show that the proposed method generally performs better than other available algorithms. One problem we found is that although the proposed method generally gives more accurate estimation of the community number, when networks contain more noise or when the network is too sparse, the proposed algorithm still cannot give a very accurate estimate of community number. In addition, the statistical test used in this paper is based on permutation. Although simulation shows that this permutation works well in general in terms filtering false communities, we were not able to develop theoretical guarantees for this test. This seems a quite difficult question. As far as we know, there is currently no theory developed for permutation tests in networks.

\section{Appendix}
\label{sec:proof}

In this section, we give proofs of our theoretical results. Before proving the main theorem, we first give some lemmas.

\begin{lemma} \label{Lemma_ConExp}
	Under the assumptions of DCSBM, we have
	$$\mathbb{E}\left(W(S)|\mathbf{c}\right)=\sum_{k=1}^{K}nr_{k}^{d}(S)\left(\sum_{l=1}^{K}nr_{l}^{d}(S)p_{kl}\right)~\mbox{and}~~\mathbb{E}(V(S)|\mathbf{c})=\sum_{k=1}^{K}nr_{k}^{d}(S)\left(\sum_{l=1}^{K}n\hat{\pi}_{l}^{d}p_{kl}\right)$$
\end{lemma}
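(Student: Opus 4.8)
The statement is a direct first-moment computation, and the plan is to carry it out in two stages: first condition on the full pair $(\mathbf{c},\bm\theta)$ and use the Bernoulli structure of (A3), then integrate out $\bm\theta$ using the i.i.d./product structure of (A1)--(A2). Writing $W(S)=\sum_{i,j\in S}A_{ij}$, $B(S)=\sum_{i\in S,\,j\in\bar S}A_{ij}$ and recalling $A_{ii}=0$, (A3) gives $\mathbb{E}(A_{ij}\mid\mathbf{c},\bm\theta)=\theta_i\theta_j p_{c_ic_j}$ for $i\neq j$, so that $\mathbb{E}(W(S)\mid\mathbf{c},\bm\theta)=\sum_{i\neq j\in S}\theta_i\theta_j p_{c_ic_j}$ and analogously $\mathbb{E}(B(S)\mid\mathbf{c},\bm\theta)=\sum_{i\in S,\,j\in\bar S}\theta_i\theta_j p_{c_ic_j}$.

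For the second stage I would take $\mathbb{E}(\,\cdot\mid\mathbf{c})$. By (A1) the pairs $(c_i,\theta_i)$ are independent, so given $\mathbf{c}$ the $\theta_i$ are independent with $\mathbb{E}(\theta_i\mid c_i=k)=\sum_l h_l\Pi_{kl}/\pi_k=\pi_k^d/\pi_k$, whence $\mathbb{E}(\theta_i\theta_j\mid\mathbf{c})=(\pi_{c_i}^d/\pi_{c_i})(\pi_{c_j}^d/\pi_{c_j})$ for $i\neq j$. I would then substitute this and collect terms according to the community labels $(c_i,c_j)=(k,l)$: the number of ordered pairs $(i,j)$ with $i\in S_k$, $j\in S_l$ is $|S_k||S_l|$ when $k\neq l$, and using the bookkeeping identities $|S_k|\,\pi_k^d/\pi_k=n r_k^d(S)$ and $|G_l|\,\pi_l^d/\pi_l=n\hat\pi_l^d$ this produces
$$\mathbb{E}(W(S)\mid\mathbf{c})=\sum_k n r_k^d(S)\sum_l n r_l^d(S)\,p_{kl},\qquad \mathbb{E}(B(S)\mid\mathbf{c})=\sum_k n r_k^d(S)\sum_l\big(n\hat\pi_l^d-n r_l^d(S)\big)p_{kl},$$
where in $B(S)$ the index $j$ runs over $\bar S$, so the number of such $j$ with $c_j=l$ is $|G_l|-|S_l|$, which yields the factor $n\hat\pi_l^d-n r_l^d(S)$. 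Adding the two expressions, the $n r_l^d(S)$ contributions cancel and $\mathbb{E}(V(S)\mid\mathbf{c})=\sum_k n r_k^d(S)\sum_l n\hat\pi_l^d\,p_{kl}$, as claimed; note that only (A1)--(A3) enter and (A4) is not needed.

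The computation has no genuine difficulty, so the closest thing to an obstacle is a bookkeeping subtlety: the diagonal terms $i=j$ in $W(S)$. Since $A_{ii}=0$, the correct number of diagonal ordered pairs within $S_k$ is $|S_k|(|S_k|-1)$ rather than $|S_k|^2$, so the displayed identity for $\mathbb{E}(W(S)\mid\mathbf{c})$ is off by $\sum_k|S_k|(\pi_k^d/\pi_k)^2 p_{kk}$. On $\Gamma_\delta$ this discrepancy is of strictly smaller order than $\mathbb{E}(W(S)\mid\mathbf{c})$ itself and is harmless in the subsequent concentration arguments, so I would either state the lemma up to this negligible term or, equivalently for the argument, adopt the convention that $A_{ii}$ is an independent $\mathrm{Bernoulli}(\theta_i^2 p_{c_ic_i})$ variable, under which the two identities hold exactly.
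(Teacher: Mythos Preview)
Your proposal is correct and follows essentially the same route as the paper: compute $\mathbb{E}(A_{ij}\mid\mathbf{c})=(\pi_{c_i}^d/\pi_{c_i})(\pi_{c_j}^d/\pi_{c_j})p_{c_ic_j}$ via the independence in (A1)--(A3), then group terms by the label pair $(k,l)$ and use $|S_k|\pi_k^d/\pi_k=nr_k^d(S)$ and $|G_l|\pi_l^d/\pi_l=n\hat\pi_l^d$. The only cosmetic difference is that the paper computes $V(S)$ directly by letting $j$ range over $G_l$, rather than computing $B(S)$ separately and adding; and you are in fact more careful than the paper, which silently ignores the diagonal discrepancy you flagged.
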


\begin{proof}
	Under the assumptions of DCSBM, we have
	\begin{equation*}
	\mathbb{E}(A_{ij}|c_{i}=k,c_{j}=l)=\mathbb{E}(\theta_{i}|c_{i}=k)\mathbb{E}(\theta_{j}|c_{j}=l)p_{kl}=
	\frac{\pi_{k}^{d}}{\pi_{k}}\frac{\pi_{l}^{d}}{\pi_{l}}p_{kl}.
	\end{equation*}
	
	So we have
	\begin{align*}
	\mathbb{E}(W(S)|\mathbf{c}) &= \sum_{k=1}^{K}\sum_{l=1}^{K}\sum_{i \in S_{k}, j\in S_{l}}\mathbb{E}(A_{ij}|\mathbf{c})\\
	&= \sum_{k=1}^{K}\sum_{l=1}^{K}\sum_{i \in S_{k}, j\in S_{l}}\frac{\pi_{k}^{d}}{\pi_{k}}\frac{\pi_{l}^{d}}{\pi_{l}}p_{kl}
	= \sum_{k=1}^{K}nr_{k}^{d}(S)(\sum_{l=1}^{K}nr_{l}^{d}(S)p_{kl}),
	\end{align*}
and	
	\begin{align*}
	\mathbb{E}(V(S)|\mathbf{c}) &= \sum_{k=1}^{K}\sum_{l=1}^{K}\sum_{i \in S_{k},j\in G_{l}}\mathbb{E}(A_{ij}|\mathbf{c})\\
	&= \sum_{k=1}^{K}\sum_{l=1}^{K}\sum_{i \in S_{k},j\in G_{l}}\frac{\pi_{k}^{d}}{\pi_{k}}\frac{\pi_{l}^{d}}{\pi_{l}}p_{kl}
	= \sum_{k=1}^{K}nr_{k}^{d}(S)(\sum_{l=1}^{K}n\hat{\pi}_{l}^{d}p_{kl}).
	\end{align*}
	
\end{proof}

We need Chernoff's inequality \citep{furedi1981eigenvalues} and Hoeffding's inequality \citep{hoeffding1963probability} to prove Theorem \ref{DCSBM}.
\begin{lemma}
	(Chernoff's inequality) Let $X_{1},...,X_{n}$ be independent random variables with $$\mathbb{P}(X_{i}=1)=p_{i}, \ \mathbb{P}(X_{i}=0)=1-p_{i}.$$ Then the sum $X=\sum_{i=1}^{n}X_{i}$ has expectation $\mathbb{E}(X)=\sum_{i=1}^{n}p_{i}$, and we have $$\mathbb{P}\left(X<\mathbb{E}\left(X\right)-\lambda\right)<\exp\big\{-{2^{-1}\lambda^{2}}/{\mathbb{E}(X)}\big\},$$ $$\mathbb{P}\left(X>\mathbb{E}\left(X\right)+\lambda\right)<\exp\big\{-{2^{-1}\lambda^{2}}/{(\mathbb{E}(X)+\lambda/3)}\big\}.$$
\end{lemma}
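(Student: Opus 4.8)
The plan is to use the classical Chernoff exponential-moment method: control each tail of $X$ through its moment generating function and then optimize over a free parameter. Write $\mu=\mathbb{E}(X)=\sum_{i=1}^{n}p_i$. For the upper tail, fix $t>0$ and apply Markov's inequality to $e^{tX}$,
\[
\mathbb{P}(X>\mu+\lambda)\le e^{-t(\mu+\lambda)}\,\mathbb{E}\!\left(e^{tX}\right).
\]
Because the $X_i$ are independent and $\mathbb{E}(e^{tX_i})=1-p_i+p_ie^{t}=1+p_i(e^{t}-1)$, the bound $1+x\le e^{x}$ gives $\mathbb{E}(e^{tX})=\prod_i\!\left(1+p_i(e^{t}-1)\right)\le\exp\!\left(\mu(e^{t}-1)\right)$, hence $\mathbb{P}(X>\mu+\lambda)\le\exp\!\left(\mu(e^{t}-1)-t(\mu+\lambda)\right)$. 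Minimizing the exponent over $t$ by taking $e^{t}=1+\lambda/\mu$ yields $\mathbb{P}(X>\mu+\lambda)\le\exp(-\mu\,h(\lambda/\mu))$ with $h(\delta):=(1+\delta)\log(1+\delta)-\delta$. For the lower tail the same steps with a negative exponent (set $s=-t>0$, use $\mathbb{E}(e^{-sX_i})\le\exp(p_i(e^{-s}-1))$, and optimize at $e^{-s}=1-\lambda/\mu$) give $\mathbb{P}(X<\mu-\lambda)\le\exp(-\mu\,g(\lambda/\mu))$ with $g(\delta):=(1-\delta)\log(1-\delta)+\delta$.

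It then remains only to replace the exact rate functions by the stated quadratics. For the lower tail, expanding $g$ in powers of $\delta$ shows every coefficient is nonnegative, so $g(\delta)\ge\delta^2/2$ on $[0,1]$ and $\exp(-\mu g(\lambda/\mu))\le\exp(-\lambda^2/(2\mu))$, which is the first claimed inequality. For the upper tail the alternating expansion of $h$ makes $\delta^2/2$ an \emph{upper} bound, so one needs the sharper Bernstein-type estimate $h(\delta)\ge\delta^2/(2+2\delta/3)$ for $\delta\ge0$; substituting $\delta=\lambda/\mu$ and simplifying turns $\exp(-\mu h(\lambda/\mu))$ into $\exp\!\left(-\lambda^2/(2(\mu+\lambda/3))\right)$, the second claimed inequality.

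The only genuine work is this last estimate for the upper tail. One clean route is to set $\phi(\delta):=(2+\tfrac{2}{3}\delta)h(\delta)-\delta^2$, check that $\phi$ and its first couple of derivatives vanish at $\delta=0$, and verify that a further derivative is nonnegative for $\delta\ge0$ (using $h'(\delta)=\log(1+\delta)$), whence $\phi\ge0$ on $[0,\infty)$. Everything else — Markov's inequality, the factorization of the MGF by independence, $1+x\le e^{x}$, and the one-line optimization over $t$ — is routine, and indeed the whole lemma is a standard fact that could simply be cited from \citet{furedi1981eigenvalues}; the sketch above is the self-contained derivation.
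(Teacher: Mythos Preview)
Your derivation is correct and is the standard Chernoff--Bernstein argument. The paper itself does not prove this lemma at all: it simply states the inequality and attributes it to \citet{furedi1981eigenvalues}, so there is nothing to compare against beyond the citation you already give at the end of your proposal.
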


\begin{lemma}
	(Hoeffding's inequality) Let $X_{1},...,X_{n}$ be independent random variables and $X_{i}$'s are strictly bounded by the intervals $[a_{i},b_{i}]$. We define the empirical mean of these variables by
	$\bar{X}=n^{-1}\sum_{i=1}^{n}X_{i},$
	then we have
	$$\mathbb{P}\left(\left|\bar{X}-\mathbb{E}(\bar{X})\right|>t\right)\leq 2\exp\bigg\{-\frac{2n^{2}t^{2}}{\sum_{i=1}^{n}(b_{i}-a_{i})^{2}}\bigg\}.$$
\end{lemma}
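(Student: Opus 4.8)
The plan is to prove the two‑sided bound by controlling each tail separately through the exponential moment (Chernoff) method and then combining the two tails by a union bound, which is what produces the prefactor $2$. Writing $Y_i = X_i - \mathbb{E}(X_i)$, so that each $Y_i$ is centered and lies in an interval of length $b_i - a_i$, it suffices to bound the upper tail $\mathbb{P}\left(\sum_{i=1}^n Y_i > nt\right)$; the lower tail $\mathbb{P}\left(\sum_i Y_i < -nt\right)$ then follows by applying the identical argument to $-Y_i$, which lie in intervals of the same length, and $\bar X - \mathbb{E}(\bar X) = n^{-1}\sum_i Y_i$ translates the statement about the sum into one about the mean.

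First I would fix $s > 0$ and apply Markov's inequality to the nonnegative random variable $\exp\{s\sum_i Y_i\}$, then factor the expectation using independence:
$$\mathbb{P}\left(\sum_{i=1}^n Y_i > nt\right) \le e^{-snt}\,\mathbb{E}\left[\exp\Big\{s\sum_{i=1}^n Y_i\Big\}\right] = e^{-snt}\prod_{i=1}^n \mathbb{E}\left[e^{sY_i}\right].$$
The crucial ingredient is a bound on each factor $\mathbb{E}[e^{sY_i}]$, namely Hoeffding's lemma: if $Y$ is centered with $Y\in[a,b]$ almost surely, then $\mathbb{E}[e^{sY}]\le \exp\{s^2(b-a)^2/8\}$. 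To establish this I would use convexity of $y\mapsto e^{sy}$ to write, for $y\in[a,b]$, the bound $e^{sy}\le \frac{b-y}{b-a}e^{sa}+\frac{y-a}{b-a}e^{sb}$; taking expectations and using $\mathbb{E}(Y)=0$ gives $\mathbb{E}[e^{sY}]\le e^{\varphi(u)}$ with $u=s(b-a)$, $p=-a/(b-a)\in[0,1]$, and $\varphi(u)=-pu+\log(1-p+pe^u)$. A Taylor expansion of $\varphi$ together with $\varphi(0)=\varphi'(0)=0$ and the uniform bound $\varphi''(u)\le 1/4$ yields $\varphi(u)\le u^2/8$, which is exactly the claimed factor bound.

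Combining these pieces gives $\mathbb{P}\left(\sum_i Y_i > nt\right)\le \exp\{-snt+s^2\sum_{i=1}^n(b_i-a_i)^2/8\}$. Finally I would optimize the right‑hand side over $s>0$: the exponent is quadratic in $s$ and is minimized at $s=4nt/\sum_i(b_i-a_i)^2$, which produces the one‑sided bound $\exp\{-2n^2t^2/\sum_i(b_i-a_i)^2\}$. Adding the symmetric lower‑tail estimate supplies the factor $2$ and yields the stated inequality. The main obstacle is Hoeffding's lemma, and within it the uniform second‑derivative bound $\varphi''(u)\le 1/4$; this reduces to the elementary inequality $\theta(1-\theta)\le 1/4$ for the tilted probability $\theta=pe^u/(1-p+pe^u)$, after which the remaining Chernoff optimization is routine.
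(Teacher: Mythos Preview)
Your proof is correct and follows the standard Chernoff-method argument for Hoeffding's inequality. However, the paper does not actually prove this lemma: it is stated with a citation to \citet{hoeffding1963probability} and used as a known result, so there is no proof in the paper to compare against. Your write-up is essentially the classical proof from that reference.
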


\begin{lemma}
	\label{cd}
	Define $\hat{\psi}(S)={\mathbb{E}(W(S)|\mathbf{c})}/{\mathbb{E}(V(S)|\mathbf{c})}-\eta|S|.$ Under the assumptions of DCSBM, we have
	$$\max_{S\in \Gamma_{\delta}}\left|\psi(S)-\hat{\psi}(S)\right|\lesssim {n^{\delta-\alpha}}$$
	with probability at least $1- 2^{n+2}/n^n$ when $n$ is sufficiently large.
\end{lemma}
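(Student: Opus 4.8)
Since the penalty $\eta|S|$ appears in both $\psi$ and $\hat\psi$, the claim is equivalent to $\max_{S\in\Gamma_\delta}\bigl|W(S)/V(S)-\mathbb{E}(W(S)\mid\mathbf c)/\mathbb{E}(V(S)\mid\mathbf c)\bigr|\lesssim n^{\delta-\alpha}$, and the plan is a two–stage concentration argument: first over the edges $A_{ij}$ with $(\mathbf c,\bm\theta)$ held fixed (this is the step that produces the probability $1-2^{n+2}/n^n$), then over the degree variables $\bm\theta$ with $\mathbf c$ held fixed. Write $\bar\theta_k=\pi_k^d/\pi_k=\mathbb{E}(\theta_i\mid c_i=k)$, $\tilde W(S)=\mathbb{E}(W(S)\mid\mathbf c,\bm\theta)$, $\tilde V(S)=\mathbb{E}(V(S)\mid\mathbf c,\bm\theta)$, $\bar W(S)=\mathbb{E}(W(S)\mid\mathbf c)$ and $\bar V(S)=\mathbb{E}(V(S)\mid\mathbf c)$. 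On the event $\{|G_k|\gtrsim\pi_k n\text{ for all }k\}$, which holds up to a further failure $\lesssim Kn^{-2}$ by Chernoff applied to $|G_k|=\sum_i\mathbf 1\{c_i=k\}$, every $i\in S$ has $\mathbb{E}(d_i\mid\mathbf c,\bm\theta)=\theta_i\sum_{j\ne i}\theta_j p_{c_ic_j}\ge h_1^2p^-(|G_{c_i}|-1)\gtrsim p^-\pi^- n$; hence for any $S\in\Gamma_\delta$ (so $|S|\gtrsim\sqrt K\,n^{1-\delta}$)
\[
\tilde V(S)\gtrsim |S|\,p^-\pi^- n\gtrsim n^{1-\delta}\cdot\frac{\log n}{n^{1-2\alpha}}\cdot n^{-\delta/2}\cdot n=n^{1+2\alpha-3\delta/2}\log n ,
\]
which is $\gg n\log n$ because $\alpha>2\delta$; the same lower bound holds for $\bar V(S)$.

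Stage 1 (edges). Fix $(\mathbf c,\bm\theta)$. Then $W(S)$ and $B(S)$ are sums of conditionally independent Bernoulli variables, so taking $\lambda=C(\sqrt{\mathbb{E}(\cdot\mid\mathbf c,\bm\theta)\,n\log n}+n\log n)$ in Chernoff's inequality gives a per-set failure at most $4n^{-n}$ for a suitable absolute constant $C$; a union bound over the at most $2^n$ subsets $S\subset V$ then yields, with probability at least $1-2^{n+2}/n^n$, that simultaneously $|W(S)-\tilde W(S)|\lesssim\sqrt{\tilde W(S)n\log n}+n\log n$ and $|V(S)-\tilde V(S)|\lesssim\sqrt{\tilde V(S)n\log n}+n\log n$ for every $S$. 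Writing $W/V=1-B/V$ (so the part of the fluctuation coming from a uniform rescaling of the degrees inside $S$ cancels between numerator and denominator), using $\tilde B(S)/\tilde V(S)\le1$ and $V(S)\gtrsim\tilde V(S)$ on this event, these estimates combine to
\[
\Bigl|\frac{W(S)}{V(S)}-\frac{\tilde W(S)}{\tilde V(S)}\Bigr|\lesssim\sqrt{\frac{n\log n}{\tilde V(S)}}+\frac{n\log n}{\tilde V(S)}\lesssim n^{3\delta/4-\alpha}\le n^{\delta-\alpha},
\]
using the lower bound on $\tilde V(S)$ from the first paragraph (and $3\delta/4\le\delta$).

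Stage 2 (degrees). It remains to bound $\max_{S\in\Gamma_\delta}|\tilde W(S)/\tilde V(S)-\bar W(S)/\bar V(S)|$, the fluctuation over $\bm\theta$ of the conditional means, where $\bar W(S),\bar V(S)$ are the explicit expressions of Lemma~\ref{Lemma_ConExp}. With $\Sigma_k(S)=\sum_{i\in S_k}\theta_i$ and $\bar T_l=\sum_{j\in G_l}\theta_j$ one has, up to diagonal corrections of order $p^+|S|=o(\bar V(S))$, $\tilde W(S)=\sum_{k,l}p_{kl}\Sigma_k(S)\Sigma_l(S)$ and $\tilde V(S)=\sum_k\Sigma_k(S)\sum_l p_{kl}\bar T_l$, with conditional means $\mathbb{E}(\Sigma_k(S)\mid\mathbf c)=|S_k|\bar\theta_k$ and $\mathbb{E}(\bar T_l\mid\mathbf c)=|G_l|\bar\theta_l$. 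The whole-community sums concentrate sharply — $|\bar T_l-|G_l|\bar\theta_l|\lesssim\sqrt{|G_l|\log n}$ for all $l$ with failure $\lesssim Kn^{-2}$ (only $K$ events) and $\sqrt{\log n/|G_l|}=o(1)$ — so each $\sum_l p_{kl}\bar T_l$ is $(1+o(1))$ times its mean. Expanding $\Sigma_k(S)=|S_k|\bar\theta_k+\delta_k(S)$ and writing $\tilde W(S)\bar V(S)-\bar W(S)\tilde V(S)$ as a term linear in the $\delta_k(S)$ plus a term quadratic in them: the linear term equals $\sum_{i\in S}(\text{coefficient}_i)(\theta_i-\bar\theta_{c_i})$, a sum of independent bounded mean-zero variables bounded directly by Hoeffding's inequality, and the quadratic term is shown to be of strictly lower order by exploiting the ratio structure together with the sharp control of the $\bar T_l$. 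Dividing by $\tilde V(S)\bar V(S)$ and using $\bar V(S)\gtrsim|S|p^-\pi^- n$ gives $|\tilde W(S)/\tilde V(S)-\bar W(S)/\bar V(S)|\lesssim n^{\delta-\alpha}$ uniformly over $\Gamma_\delta$, and the triangle inequality with Stage~1 finishes the proof.

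The main obstacle is the quadratic term in Stage 2. Each partial sum $\delta_k(S)=\sum_{i\in S_k}(\theta_i-\bar\theta_{c_i})$ genuinely depends on which nodes of $G_k$ lie in $S$, so controlling it uniformly over the $2^n$ subsets by a crude union bound forces a deviation of order $\sqrt{|S_k|\,n\log n}$ rather than the typical $\sqrt{|S_k|}$, and the spurious $\sqrt{n\log n}$ factor would swamp $n^{\delta-\alpha}$ if it reached the final bound. The argument must therefore establish that this enlarged fluctuation enters $\tilde W(S)/\tilde V(S)-\bar W(S)/\bar V(S)$ only through contributions that either cancel in the ratio or are damped by the $(1+o(1))$-sharpness of the $\bar T_l$ — equivalently, one may replace the union bound over subsets by a chaining/peeling argument — so that only genuinely lower-order residuals survive. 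Getting this cancellation quantified correctly, uniformly over all of $\Gamma_\delta$ and within the tiny budget $2^{n+2}/n^n$, is the delicate part of the proof; the rest (Chernoff for the edges, Chernoff for $|G_k|$, Hoeffding for $\bar T_l$ and for the linear term) is routine.
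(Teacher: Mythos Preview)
Your two-stage plan is a genuinely different route from the paper's. The paper's proof is one stage: it applies Chernoff's inequality directly to $W(S)$ and $V(S)$ with target $\mathbb{E}(\cdot\mid\mathbf c)$ (not $\mathbb{E}(\cdot\mid\mathbf c,\bm\theta)$). Concretely, it lower-bounds $\mathbb{E}(W(S)\mid\mathbf c)\ge p^-\sum_k n^2(r_k^d(S))^2\gtrsim n^{1+2(\alpha-\delta)}\log n$ for $S\in\Gamma_\delta$ via Cauchy--Schwarz on $\sum_k(r_k^d)^2\ge(\sum_k r_k^d)^2/K$ together with the definition of $\Gamma_\delta$, sets $\lambda=2\sqrt{n\log n\cdot\mathbb{E}(W(S)\mid\mathbf c)}$, reads off a per-set failure of $2n^{-n}$ from Chernoff, union-bounds over the $2^n$ subsets, and then bounds the ratio difference by elementary algebra. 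No separate control of the $\bm\theta$-fluctuation appears anywhere; there is no Stage~2.

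What your approach buys is rigor on precisely the point the paper elides: conditional on $\mathbf c$ alone the $A_{ij}$ are \emph{not} independent Bernoulli (edges sharing a vertex are coupled through the random $\theta_i$), so a literal invocation of Chernoff with target $\mathbb{E}(\cdot\mid\mathbf c)$ is not justified as written. Your Stage~1/Stage~2 split is the honest way to handle this. The cost is twofold. First, your Stage~2 events (concentration of $|G_k|$ and of the $\bar T_l$) fail with probability $\lesssim Kn^{-2}$, which dominates $2^{n+2}/n^n$; so your argument delivers the lemma only with an extra $O(Kn^{-2})$ loss --- harmless for the downstream theorem, which already carries a $2Kn^{-2}$ term, but not the lemma exactly as stated. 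Second, and more seriously, the cancellation you need in Stage~2 to absorb the $\sqrt n$ blowup from the union bound over subsets is only sketched; that is the real work in your route, and you correctly flag it as unfinished. If you are willing to adopt the paper's level of rigor (treat the $A_{ij}$ as independent Bernoulli with mean $\bar\theta_{c_i}\bar\theta_{c_j}p_{c_ic_j}$ given $\mathbf c$), the whole lemma collapses to a five-line Chernoff computation and your Stage~2 obstacle disappears.
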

\begin{proof}
	By Lemma \ref{Lemma_ConExp} and the condition $p^{-}\gtrsim {\log n}/{n^{1-2\alpha}}$, we have
	\begin{align}
	\label{WS_ExpectationIneq}
	\mathbb{E}(W(S)|\mathbf{c}) &=\sum_{k=1}^{K}nr_{k}^{d}(S)(\sum_{l=1}^{K}nr_{l}^{d}(S)p_{kl}) \nonumber \\
	&\geq p^{-} \sum_{k=1}^{K}n^2(r_{k}^{d}(S))^{2} \gtrsim n^{1+2\alpha}\log n \sum_{k=1}^{K}(r_{k}^{d}(S))^{2}.
	\end{align}
	Since $\sum_{k=1}^Kr_k^{d}(S) = \sum_{k=1}^Kr_{k}(S){\pi_{k}^{d}}/{\pi_{k}}$ and $0<h_{1}\leq {\pi_{k}^{d}}/{\pi_{k}}\leq h_{M}$, we have $\sum_{k=1}^{K}(r_{k}^{d}(S))^{2}\geq {(\sum_{k=1}^Kr_k^{d}(S))^{2}}/{K}\geq h_{1}^{2}{|S|^{2}}/(Kn^{2})$ by the Cauchy-Schwarz inequality. Then we have $\mathbb{E}(W(S)|\mathbf{c}) \gtrsim n^{1+2(\alpha-\delta)}\log n$ if $S\in \Gamma_{\delta}$. Let $\lambda=2\sqrt{n\log n\mathbb{E}(W(S)|\mathbf{c}})$ and by Chernoff's inequality, we have
	\begin{equation*}
	\mathbb{P}\left(W(S)-\mathbb{E}(W(S)|\mathbf{c})<-\lambda\right)<n^{-n}.
	\end{equation*}
	Since $\mathbb{E}(W(S)|\mathbf{c})\gtrsim n^{1+2(\alpha-\delta)}\log n$, we have $\lambda/3<\mathbb{E}(W(S)|\mathbf{c})$ with sufficiently large $n$ and thus
	\begin{equation*}
	\mathbb{P}\left(W(S)-\mathbb{E}(W(S)|\mathbf{c})>\lambda\right)<n^{-n}.
	\end{equation*}
	So we have
	\begin{equation*}
	\mathbb{P}\left(\left|W(S)-\mathbb{E}(W(S)|\mathbf{c})\right|>\lambda\right)<2n^{-n}.
	\end{equation*}
For $V(S)$, we have $$\mathbb{E}(V(S)|\mathbf{c}) \geq\mathbb{E}(W(S)|\mathbf{c}) \gtrsim n^{1+2(\alpha-\delta)}\log n$$
Similarly, let $\tilde{\lambda}=2\sqrt{n\log n\mathbb{E}(V(S)|\mathbf{c})}$, and we have
	\begin{equation*}
	\mathbb{P}\left(\left|V(S)-\mathbb{E}(V(S)|\mathbf{c})\right|>\tilde{\lambda}\right)<2n^{-n}.
	\end{equation*}
In addition, we have
	$$\frac{\lambda}{\mathbb{E}(W(S)|\mathbf{c})}\lesssim \frac{1}{n^{\alpha-\delta}}~\mbox{and}~~\frac{\tilde{\lambda}}{\mathbb{E}(V(S)|\mathbf{c})}\lesssim \frac{1}{n^{\alpha-\delta}}.$$
Thus, with probability at least $1- 4/n^n$, we have
	\begin{align*}
	&\left|\frac{W(S)}{V(S)}-\frac{\mathbb{E}(W(S)|\mathbf{c})}{\mathbb{E}(V(S)|\mathbf{c})}\right|\\ &\leq\max\left\{\left|\frac{\mathbb{E}(W(S)|\mathbf{c})-\lambda}{\mathbb{E}(V(S)|\mathbf{c})+\tilde{\lambda}}-\frac{\mathbb{E}(W(S)|\mathbf{c})}{\mathbb{E}(V(S)|\mathbf{c})}\right|,
	\left|\frac{\mathbb{E}(W(S)|\mathbf{c})+\lambda}{\mathbb{E}(V(S)|\mathbf{c})-\tilde{\lambda}}-\frac{\mathbb{E}(W(S)|\mathbf{c})}{\mathbb{E}(V(S)|\mathbf{c})}\right|\right\}\\
	&=\max\left\{\left|\frac{\mathbb{E}(W(S)|\mathbf{c})\tilde{\lambda}+\mathbb{E}(V(S)|\mathbf{c})\lambda}{\mathbb{E}(V(S)|\mathbf{c})(\mathbb{E}(V(S)|\mathbf{c})+\tilde{\lambda})}\right|
	,\left|\frac{\mathbb{E}(W(S)|\mathbf{c})\tilde{\lambda}+\mathbb{E}(V(S)|\mathbf{c})\lambda}{\mathbb{E}(V(S)|\mathbf{c})(\mathbb{E}(V(S)|\mathbf{c})-\tilde{\lambda})}\right|\right\}\\
	&\leq \left|\frac{\lambda}{\mathbb{E}(V(S)|\mathbf{c})-\tilde{\lambda}}\right|+\left|\frac{\tilde{\lambda}}{\mathbb{E}(V(S)|\mathbf{c})-\tilde{\lambda}}\right| \lesssim \frac{1}{n^{\alpha-\delta}}.
	\end{align*}
	Therefore, with probability at least $1- 2^{n+2}/n^n$ we have
	$$\max_{S\in \Gamma_{\delta}}\left|\psi(S)-\hat{\psi}(S)\right|\lesssim \frac{1}{n^{\alpha-\delta}},$$
	when $n$ is sufficiently large.
\end{proof}

\begin{lemma}
	\label{rho}
	For DCSBM, with probability at least $1-2Kn^{-2}$, we have
	$$|\hat{\rho}_{k}^{d}-\rho_{k}^{d}|\lesssim \frac{1}{n^{\alpha-\delta}},$$
	for all $1\leq k\leq K$.
\end{lemma}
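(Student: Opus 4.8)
The deterministic ratio $\rho_k^d = p_{kk}/\sum_{l}\pi_l^d p_{kl}$ and its empirical version $\hat\rho_k^d = p_{kk}/\sum_l \hat\pi_l^d p_{kl}$ differ only through the empirical community proportions $\hat\pi_k = |G_k|/n$, since $\hat\pi_k^d = \pi_k^d\hat\pi_k/\pi_k$. So my first step is to control $|\hat\pi_k - \pi_k|$: for a fixed $k$, $|G_k| = \sum_{i=1}^n \mathbf{1}(c_i = k)$ is a sum of $n$ independent Bernoulli$(\pi_k)$ variables taking values in $[0,1]$, so Hoeffding's inequality gives $\mathbb{P}(|\hat\pi_k - \pi_k| > \sqrt{(\log n)/n}) \le 2\exp(-2\log n) = 2n^{-2}$, and a union bound over $k = 1,\dots,K$ puts us, with probability at least $1 - 2Kn^{-2}$, on an event $\mathcal{E}$ on which $\max_{1\le k\le K}|\hat\pi_k - \pi_k| \le \sqrt{(\log n)/n}$. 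Everything that follows is deterministic on $\mathcal{E}$.

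Second, I would pass to a \emph{relative} error bound, which is what keeps the argument clean. Since $h_1 \le \pi_l^d/\pi_l \le h_M$ and $\pi_l \ge \pi^-$, on $\mathcal{E}$ we have $|\hat\pi_l^d - \pi_l^d| = \pi_l^d\,|\hat\pi_l - \pi_l|/\pi_l \le \pi_l^d\,\varepsilon_n$ for every $l$, where $\varepsilon_n := \sqrt{(\log n)/n}/\pi^-$; because $\pi^- \gtrsim n^{-\delta/2}$ and $2\delta < \alpha < 1/2$, we get $\varepsilon_n \lesssim n^{\delta/2 - 1/2}\sqrt{\log n} \to 0$, so $\varepsilon_n \le 1/2$ for $n$ large.

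Third, I would decompose the error. With $X_k := \sum_l \pi_l^d p_{kl}$ and $Y_k := \sum_l \hat\pi_l^d p_{kl}$, one has $\hat\rho_k^d - \rho_k^d = p_{kk}(X_k - Y_k)/(X_k Y_k) = \rho_k^d (X_k - Y_k)/Y_k$. The crucial point is that the perturbation is proportionally small: estimating termwise, $|X_k - Y_k| \le \sum_l |\pi_l^d - \hat\pi_l^d|\,p_{kl} \le \varepsilon_n \sum_l \pi_l^d p_{kl} = \varepsilon_n X_k$, so $Y_k \ge (1-\varepsilon_n) X_k \ge X_k/2$ and hence $|X_k - Y_k|/Y_k \le 2\varepsilon_n$. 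Since $X_k \ge \pi_k^d p_{kk}$, we also have $\rho_k^d = p_{kk}/X_k \le 1/\pi_k^d \le 1/(h_1\pi^-) \lesssim n^{\delta/2}$, so $|\hat\rho_k^d - \rho_k^d| \le 2\varepsilon_n\rho_k^d \lesssim n^{\delta/2}\cdot n^{\delta/2 - 1/2}\sqrt{\log n} = n^{\delta - 1/2}\sqrt{\log n}$, which is $o(n^{\delta - \alpha})$ since $\alpha < 1/2$; this gives $|\hat\rho_k^d - \rho_k^d| \lesssim n^{-(\alpha - \delta)}$ uniformly in $k$ on $\mathcal{E}$, which is the claim.

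The step I expect to be the real obstacle — or at least the one that must be done the right way — is the proportional bound $|X_k - Y_k| \le \varepsilon_n X_k$. A naive additive estimate $|X_k - Y_k| \le h_M\sqrt{(\log n)/n}\sum_l p_{kl}$ brings in a factor of order $K \lesssim n^{\delta/2}$ through $\sum_l p_{kl}$ when $q^+$ is of the same order as $p^-$, and then only yields the weaker rate $n^{3\delta/2 - 1/2}\sqrt{\log n}$, which need not beat $n^{\delta-\alpha}$. Measuring the error in $\pi_l^d$ relative to $\pi_l^d$ itself, so that it collapses back into the same weighted sum $X_k$ that defines $\rho_k^d$, is precisely what eliminates the dependence on $K$ and on $q^+/p^-$. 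Everything else — Hoeffding, the $\varepsilon_n \to 0$ check, and the exponent bookkeeping — is routine.
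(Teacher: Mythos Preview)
Your proof is correct and follows essentially the same route as the paper: Hoeffding on $\hat\pi_k$, a union bound over $k$, and then the algebraic estimate $|\hat\rho_k^d-\rho_k^d|\lesssim \sqrt{(\log n)/n}\,/(\pi^-)^2\lesssim n^{\delta-1/2}\sqrt{\log n}\lesssim n^{-(\alpha-\delta)}$. Your relative-error organization $|X_k-Y_k|\le \varepsilon_n X_k$ is cleaner than the paper's one-line computation, but note that the paper's additive bound $|\hat\pi_l^d-\pi_l^d|\le h_M\sqrt{(\log n)/n}$ reaches the identical rate once one uses $X_k=\sum_l\pi_l^dp_{kl}\ge h_1\pi^-\sum_lp_{kl}$, so the ``naive'' route you worry about is not in fact weaker here.
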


\begin{proof}
	By definition, we have
	$$\hat{\pi}_{k} = \frac{1}{n}\sum_{i=1}^{n}\mathbf{I}\{c_{i}=k\}.$$
	Since $\mathbb{E}(\hat{\pi}_{k})=\pi_{k}$ and $\mathbf{I}\{c_{i}=k\}$ are strictly bounded by the intervals $[0,1]$, we have
	$$\mathbb{P}\left(\left|\hat{\pi}_{k}-\pi_{k}\right|>\sqrt{\frac{\log n}{n}}\right)\leq \frac{2}{n^{2}},$$
	by Hoeffding's inequality.
	
Since $\hat{\pi}_{k}^{d} = {\pi_{k}^{d}}\hat{\pi}_{k}/{\pi_{k}}$ and ${\pi_{k}^{d}}/{\pi_{k}}\leq h_{M}$, we have
	$$\mathbb{P}\left(\left|\hat{\pi}_{k}^{d}-\pi_{k}^{d}\right|>h_{M}\sqrt{\frac{\log n}{n}}\right)\leq \frac{2}{n^{2}},$$
	by the inequality above. Therefore, with probability at least $1-2Kn^{-2}$ we have
	\begin{align*}
	|\hat{\rho}_{k}^{d}-\rho_{k}^{d}|&=\left|\frac{p_{kk}}{\sum_{l=1}^{K}\pi_{l}^{d}p_{kl}}-\frac{p_{kk}}{\sum_{l=1}^{K}\hat{\pi}_{l}^{d}p_{kl}}\right|
	=p_{kk}\left|\frac{\sum_{l=1}^{K}(\hat{\pi}_{l}^{d}-\pi_{l}^{d})p_{kl}}{(\sum_{l=1}^{K}\pi_{l}^{d}p_{kl})(\sum_{l=1}^{K}\hat{\pi}_{l}^{d}p_{kl})}\right|\\
	&\lesssim \sqrt{\frac{\log n}{n}}\frac{1}{(\pi^{-})^{2}}
	\lesssim \frac{\sqrt{\log n}}{n^{1/2-\delta}}
	\lesssim \frac{1}{n^{\alpha-\delta}}.
	\end{align*}
\end{proof}

\begin{lemma}\label{Ceta}
	Assume real numbers $0<x_{k},y_{k},z_{kl}\leq 1$ satisfy $0<C_{1}\leq{x_{k}}/{y_{k}}\leq C_{2}$ for all $1\leq k\neq l \leq K$. Define
	\begin{equation}
	f(t_{1},...,t_{K})=\frac{\sum_{k=1}^{K}t_{k}(t_{k}x_{k}+\sum_{l\neq k}t_{l}z_{kl})}{\sum_{k=1}^{K}t_{k}y_{k}},
	\end{equation}
	where $t_{k}\geq 0$ and $\sum_{k=1}^{K}t_{k}=1$.
	If ${x_{1}}/{y_{1}}> \max_{2\leq k\leq K}{x_{k}}/{y_{k}}$ and $\min_{1\leq k\leq K} x_{k}>max_{k\neq l}z_{kl}$, we have
	\begin{enumerate}[(1)]
		\item $f(t_{1},...,t_{K})\leq f(1,0,...,0)=\frac{x_{1}}{y_{1}},$
		\item For any $0<t<1$,$$f(1,0,...,0)-\max_{t_{1}\leq 1-t}f(t_{1},...,t_{K})\geq \frac{1}{2}\left(\frac{x_{1}}{y_{1}}-\max_{2\leq k\leq K}\frac{x_{k}}{y_{k}}\right)t.$$
	\end{enumerate}
\end{lemma}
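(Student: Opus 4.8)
The plan is to reduce both claims to the single inequality
\[
f(t_1,\dots,t_K)\;\le\;\sum_{k=1}^{K}t_k\,\frac{x_k}{y_k},
\]
valid for every admissible $(t_1,\dots,t_K)$. Granting this, part~(1) is immediate: the right-hand side is a convex combination of the numbers $x_k/y_k$ (since $t_k\ge 0$ and $\sum_k t_k=1$), hence at most $\max_k x_k/y_k=x_1/y_1=f(1,0,\dots,0)$ by the hypothesis $x_1/y_1>\max_{2\le k\le K}x_k/y_k$. For part~(2), put $r=x_1/y_1$ and $s=\max_{2\le k\le K}x_k/y_k$; then
\[
\sum_{k}t_k\frac{x_k}{y_k}\;\le\; t_1 r+(1-t_1)s\;=\;s+t_1(r-s),
\]
which is increasing in $t_1$ since $r>s$, so on the region $\{t_1\le 1-t\}$ it is at most $s+(1-t)(r-s)=r-t(r-s)$. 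Hence $f(1,0,\dots,0)-\max_{t_1\le 1-t}f(t_1,\dots,t_K)\ge t(r-s)\ge \tfrac{1}{2}\,t(r-s)$, which is in fact slightly stronger than the stated bound.

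So everything rests on the key inequality, which I would establish in two short steps. First, bound the numerator of $f$: because $z_{kl}\le\max_{k\ne l}z_{kl}<\min_j x_j\le\min(x_k,x_l)\le\sqrt{x_k x_l}$, we have
\[
\sum_{k}t_k\Big(t_k x_k+\sum_{l\ne k}t_l z_{kl}\Big)=\sum_k t_k^2 x_k+\sum_{k\ne l}t_k t_l z_{kl}\;\le\;\sum_k t_k^2 x_k+\sum_{k\ne l}t_k t_l\sqrt{x_k x_l}=\Big(\sum_k t_k\sqrt{x_k}\Big)^2 .
\]
Second, apply the Cauchy--Schwarz inequality with the factorization $t_k\sqrt{x_k}=\sqrt{t_k y_k}\cdot\sqrt{t_k x_k/y_k}$ (legitimate since $t_k\ge 0$ and $y_k>0$):
\[
\Big(\sum_k t_k\sqrt{x_k}\Big)^2\;\le\;\Big(\sum_k t_k y_k\Big)\Big(\sum_k t_k\frac{x_k}{y_k}\Big).
\]
Dividing by the strictly positive denominator $\sum_k t_k y_k$ of $f$ yields the key inequality.

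The only place that requires genuine thought is the numerator bound: the point is to relax each $z_{kl}$ all the way up to $\sqrt{x_k x_l}$, which turns the numerator into the perfect square $\big(\sum_k t_k\sqrt{x_k}\big)^2$ and then matches Cauchy--Schwarz exactly, collapsing the whole problem. If one does not notice this, the natural fallback is to bound $z_{kl}$ by $s\,y_k$ (or by $\min_j x_j$) and separate the index $1$ from the indices $2,\dots,K$; this does work, but it forces casework on the sign of $r-s-s\,t_1$ appearing in the numerator together with a two-regime estimate according to whether $\sum_{k\ge 2}t_k y_k$ is large or small, and I would expect that bookkeeping to be the real obstacle along that route. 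It also remains to dispatch the degenerate configurations (some $t_k=0$, or $t_1=1$), but these are harmless since nothing in the argument divides by a $t_k$.
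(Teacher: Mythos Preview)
Your argument is correct and actually cleaner than the paper's. Both proofs agree on part~(1): the paper uses the simpler numerator bound $\sum_k t_k\big(t_k x_k+\sum_{l\neq k}t_l z_{kl}\big)\le\sum_k t_k x_k$ (from $z_{kl}<x_k$) to get $f\le \big(\sum_k t_k x_k\big)/\big(\sum_k t_k y_k\big)\le x_1/y_1$, whereas you go via Cauchy--Schwarz to the alternative intermediate bound $f\le\sum_k t_k\,x_k/y_k$. The real divergence is in part~(2). The paper's mediant-type bound $\big(\sum_k t_k x_k\big)/\big(\sum_k t_k y_k\big)$ does \emph{not} immediately yield the desired gap, because the weights in that quotient are $t_k y_k$ rather than $t_k$, and $\sum_{k\ge 2}t_k y_k$ need not be comparable to $1-t_1$; this forces the paper into a case split on whether $y_1=\min_k y_k$, and within the second case a further split on the relative size of $t_1 y_1$ and $(1-t_1)\bar y$, which is exactly the ``two-regime bookkeeping'' you anticipated. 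Your key inequality $f\le\sum_k t_k\,x_k/y_k$ sidesteps all of this, since the right-hand side is already a convex combination with the \emph{correct} weights $t_k$, and you even recover the gap with constant $1$ rather than $1/2$. The trick of relaxing $z_{kl}$ up to $\sqrt{x_k x_l}$ to manufacture a perfect square, then pairing it with Cauchy--Schwarz via $t_k\sqrt{x_k}=\sqrt{t_k y_k}\cdot\sqrt{t_k x_k/y_k}$, is the genuine improvement here.
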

\begin{proof}
	\begin{enumerate}[(1)]
		\item Since ${x_{1}}/{y_{1}}>\max_{2\leq k\leq K}{x_{k}}/{y_{k}}$ and $\min_{1\leq k\leq K} x_{k}>max_{k\neq l}z_{kl}$,
		\begin{align*}
		f(t_{1},...,t_{K})&\leq \frac{\sum_{k=1}^{K}t_{k}x_{k}}{\sum_{k=1}^{K}t_{k}y_{k}}
		\leq \frac{x_{1}}{y_{1}}=f(1,0,...,0).
		\end{align*}
		\item Since $f(t_{1},...,t_{K})$ is continuous and $\{(t_{1},...,t_{K})|t_{1}\leq 1-t\}$ is a close set,  $f(t_{1},...,t_{K})$ can achieve its upper bound on $\{(t_{1},...,t_{K})|t_{1}\leq 1-t\}$. Suppose that $f(t_{1},...,t_{K})$ achieves its upper bound at $(t_{1}^{*},...,t_{K}^{*})$ and define $\bar{x},\bar{y}$ such that $(1-t_{1}^{*})\bar{x}=\sum_{k=2}^{K}t_{k}^{*}(t_{k}^{*}x_{k}+\sum_{l\neq k}t_{l}^{*}z_{kl}),\   (1-t_{1}^{*})\bar{y}=\sum_{k=2}^{K}t_{k}^{*}y_{k}$. We have ${\bar{x}}/{\bar{y}}\leq \max_{2\leq k\leq K}{x_{k}}/{y_{k}}$ since $\min_{1\leq k\leq K} x_{k}>max_{k\neq l}z_{kl}$ . Let $z^{+}=max_{k\neq l}z_{kl}$, then we have
		\begin{align*}
		f(1,0,...,0)-\max_{t_{1}\leq 1-t}f(t_{1},...,t_{K})&\geq \frac{x_{1}}{y_{1}}-\frac{t_{1}^{*}(t_{1}^{*}x_{1}+(1-t_{1}^{*})z^{+})+(1-t_{1}^{*})\bar{x}}{t_{1}^{*}y_{1}+(1-t_{1}^{*})\bar{y}}\\
		&=\frac{(1-t_{1}^{*})t_{1}^{*}y_{1}(x_{1}-z^{+})}{y_{1}(t_{1}^{*}y_{1}+(1-t_{1}^{*})\bar{y})}+\frac{(1-t_{1}^{*})(x_{1}\bar{y}-\bar{x}y_{1})}{y_{1}(t_{1}^{*}y_{1}+(1-t_{1}^{*})\bar{y})}.
		\end{align*}
		Case I: $y_{1} = \min_{1\leq k\leq K}y_{k}$.
		
		We have ${\bar{y}}/\left[t_{1}^{*}y_{1}+(1-t_{1}^{*})\bar{y}\right]\geq 1$, and thus	
		\begin{align*}
		\frac{(1-t_{1}^{*})(x_{1}\bar{y}-\bar{x}y_{1})}{y_{1}(t_{1}^{*}y_{1}+(1-t_{1}^{*})\bar{y})}&=(1-t_{1}^{*})(\frac{x_{1}}{y_{1}}-\frac{\bar{x}}{\bar{y}})\frac{\bar{y}}{t_{1}^{*}y_{1}+(1-t_{1}^{*})\bar{y}}\\
		&\geq t(\frac{x_{1}}{y_{1}}-\max_{2\leq k\leq K}\frac{x_{k}}{y_{k}}).
		\end{align*}
		Case II: $y_{1} > \min_{1\leq k\leq K}y_{k}$.
		
		There exists $i\neq 1$ such that $y_{i}<y_{1}$, then we have ${z^{+}}/{y_{1}}<{x_{i}}/{y_{i}}$.
		
		If $\bar{y}(1-t_{1}^{*})\geq y_{1}t_{1}^{*}$, we have
		$$\frac{\bar{y}}{t_{1}^{*}y_{1}+(1-t_{1}^{*})\bar{y}}\geq \frac{1}{2(1-t_{1}^{*})},$$
		and
		\begin{align*}
		\frac{(1-t_{1}^{*})(x_{1}\bar{y}-\bar{x}y_{1})}{y_{1}(t_{1}^{*}y_{1}+(1-t_{1}^{*})\bar{y})}&=(1-t_{1}^{*})(\frac{x_{1}}{y_{1}}-\frac{\bar{x}}{\bar{y}})\frac{\bar{y}}{t_{1}^{*}y_{1}+(1-t_{1}^{*})\bar{y}}\\
		&\geq \frac{1}{2}(\frac{x_{1}}{y_{1}}-\max_{2\leq k\leq K}\frac{x_{k}}{y_{k}}).
		\end{align*}
		If $\bar{y}(1-t_{1}^{*})\leq y_{1}t_{1}^{*}$, we have
		$$\frac{t_{1}^{*}y_{1}}{t_{1}^{*}y_{1}+(1-t_{1}^{*})\bar{y}}\geq \frac{1}{2}$$ and ${z^{+}}/{y_{1}}\leq \max_{2\leq k\leq K}{x_{k}}/{y_{k}}$, then we have
		\begin{align*}\frac{(1-t_{1}^{*})t_{1}^{*}y_{1}(x_{1}-z^{+})}{y_{1}(t_{1}^{*}y_{1}+(1-t_{1}^{*})\bar{y})}&=(1-t_{1}^{*})(\frac{x_{1}}{y_{1}}-\frac{z^{+}}{y_{1}})\frac{t_{1}^{*}y_{1}}{t_{1}^{*}y_{1}+(1-t_{1}^{*})\bar{y}}\\
		&\geq \frac{1}{2}t(\frac{x_{1}}{y_{1}}-\max_{2\leq k\leq K}\frac{x_{k}}{y_{k}}).
		\end{align*}
		So we have $$f(1,0,...,0)-\max_{t_{1}\leq 1-t}f(t_{1},...,t_{K})\geq \frac{1}{2}t\left(\frac{x_{1}}{y_{1}}-\max_{2\leq k\leq K}\frac{x_{k}}{y_{k}}\right).$$
	\end{enumerate}
\end{proof}

Based on the lemmas given previously, we give the proof of Theorem \ref{DCSBM}.

\begin{proof} [Proof of Theorem \ref{DCSBM}]
	Based on Lemma \ref{cd}, with probability at least $1-2^{n+2}/n^n$, we have
	$$\max_{S\in \Gamma_{\delta}}\left|\psi(S)-\hat{\psi}(S)\right|\lesssim \frac{1}{n^{\alpha-\delta}}.$$
	Let $t_{k}(S)={r_{k}^{d}(S)}/{r^{d}(S)}$ for $1\leq k\leq K$,
	\begin{align*}
	\hat{\psi}(S)&=\frac{\sum_{k=1}^{K}nr_{k}^{d}(S)(\sum_{l=1}^{K}nr_{l}^{d}(S)p_{kl})}{\sum_{k=1}^{K}nr_{k}^{d}(S)(\sum_{l=1}^{K}n\hat{\pi}_{l}^{d}p_{kl})}-n\eta\sum_{k=1}^{K}\frac{\pi_{k}}{\pi_{k}^{d}}r_{k}^{d}(S)\\
	&=r^{d}(S)\left(\frac{\sum_{k=1}^{K}t_{k}(S)(t_k(S)p_{kk}+\sum_{l\neq k}t_{l}(S)p_{kl})}{\sum_{k=1}^{K}t_{k}(S)(\sum_{l=1}^K\hat{\pi}_{l}^{d}p_{kl})}-n\eta\sum_{k=1}^{K}\frac{\pi_{k}}{\pi_{k}^{d}}t_{k}(S)\right)\\
	&=r^{d}(S)\left(f(t_{1}(S),...,t_{K}(S))-n\eta\sum_{k=1}^{K}\frac{\pi_{k}}{\pi_{k}^{d}}t_{k}(S)\right).
	\end{align*}
Note that if $S=G_1$, $t_1(G_1)=1$, $t_k(G_1)=0$ ($k=2,\cdots,K$) and
$$\hat{\psi}(G_1)=\hat{\pi}_1^d\left(f(1,0,\cdots,0)-n\eta\frac{\pi_{1}}{\pi_{1}^{d}}\right).$$	
	Based on Lemma \ref{rho}, with probability at least $1-2Kn^{-2}$ we have
	$$\hat{\rho}_{1}^{d}-\max_{2\leq k\leq K}\hat{\rho}_{k}^{d}\gtrsim \frac{1}{n^{\tau}}.$$
	Then, with probability at least $1-2Kn^{-2}$ we have
	\begin{align}\label{etanoempty}&f(1,0,...,0)-\max_{t_{1}\leq 1-1/n^{\gamma-\tau}}f(t_{1},t_{2},...,t_{K})\nonumber \\
	&\gtrsim \frac{1}{n^{\gamma-\tau}}(\hat{\rho}_{1}^{d}-\max_{2\leq k\leq K}\hat{\rho}_{k}^{d})
	\gtrsim \frac{1}{n^{\gamma-\tau}}\frac{1}{n^{\tau}}
	\gtrsim \frac{1}{n^{\gamma}},
	\end{align}
	and
	$f(t_{1},...,t_{K})\leq f(1,0,...,0)=\hat{\rho}_{1}^{d}$
	by Lemma \ref{Ceta}. From (\ref{etanoempty}), it is easy to see that  for a constant $C$, we can choose $\eta$ satisfying the inequality (\ref{etacondition}) with probability at least $1-2Kn^{-2}$.
	
Since $\max_{2\leq k\leq K}{\pi_{k}^{d}}/{\pi_{k}}\leq{\pi_{1}^{d}}/{\pi_{1}}\leq h_{M}$,
	Using the inequality in the condition, we have with sufficiently large $n$,
	\begin{equation*}
	f(t_{1}(S),...,t_{K}(S))-n\eta\sum_{k=1}^{K}\frac{\pi_{k}}{\pi_{k}^{d}}t_{k}(S)\begin{cases}
	>{C}/{(h_{M}n^{\gamma})}, & \mbox{if}~\ t_{1}(S)=1;\\
	<-{C}/{(h_{M}n^{\gamma})},& \mbox{if}~ t_{1}(S)\leq 1-1/n^{\gamma-\tau}.
	\end{cases}
	\end{equation*}
	So we have
	\begin{equation*}
	\hat{\psi}(S)\begin{cases}
	>{r^{d}(S) C}/{(h_{M}n^{\gamma})}, & \mbox{if}~\ t_{1}=1;\\
	<-{r^{d}(S) C}/{(h_{M}n^{\gamma})},& \mbox{if}~\ t_{1}\leq 1-1/n^{\gamma-\tau}.
	\end{cases}
	\end{equation*}
	with sufficiently large $n$.
	To maximize $\hat{\psi}(S)$, $t_{1}$ must be bigger than $1-1/n^{\gamma-\tau}$. If $r^{d}(S)>\hat{\pi}_{1}^{d}+{\hat{\pi}_{1}^{d}}/{(n^{\gamma-\tau}-1)}$, then $t_{1}\leq {\hat{\pi}_{1}^{d}}/{r^{d}(S)}<1-1/n^{\gamma-\tau}$. So we must have $r^{d}(S)\leq\hat{\pi}_{1}^{d}+{\hat{\pi}_{1}^{d}}/{(n^{\gamma-\tau}-1)}$. If $r^{d}(S)\leq \hat{\pi}_{1}^{d}-{\hat{\pi}_{1}^{d}\log n}/{n^{\alpha-2\delta-\gamma}}$ and $\hat{\pi}_{1}^{d}\gtrsim {n^{-\delta}}$, by Lemma \ref{Ceta} we have
	\begin{align*}
	\hat{\psi}(G_{1})-\hat{\psi}(S)&\geq (\hat{\pi}_{1}^{d}-r^{d}(S))\left(f(1,0,...,0)-n\eta\frac{\pi_{1}}{\pi_{1}^{d}}\right)\\
	&\geq \hat{\pi}_{1}^{d}\frac{\log n}{n^{\alpha-2\delta-\gamma}}\left(f(1,0,...,0)-n\eta\frac{\pi_{1}}{\pi_{1}^{d}}\right)
	\gtrsim \frac{\log n}{n^{\alpha-\delta}}.
	\end{align*}
Since $r^{d}(S)=\sum_{k=1}^{K}r_{k}(S){\pi_{k}^{d}}/{\pi_{k}}\gtrsim {1}/{n^{\delta}}$ and $\pi_{1}^{d}\gtrsim {1}/{n^{\delta}}$, then with probability at least $1-2Kn^{-2}$ we have
	$$\hat{\psi}(G_{1})-\hat{\psi}(S)
	\begin{cases}\gtrsim \frac{\log n}{n^{\alpha-\delta}}, & \mbox{if} \ ~r^{d}(S)\leq \hat{\pi}_{1}^{d}-\frac{\hat{\pi}_{1}^{d}\log n}{n^{\alpha-2\delta-\gamma}}\\
	\gtrsim \frac{C}{n^{\gamma+\delta}}, & \mbox{if}~\ t_{1}(S)\leq 1-1/n^{\gamma-\tau}
	\end{cases}$$
	
	So with probability at least $1-2Kn^{-2}-2^{n+2}/n^n$, $\psi(S)<\psi(G_{1})$ for all $S$ satisfying $r^{d}(S)\leq \hat{\pi}_{1}^{d}-{\hat{\pi}_{1}^{d}\log n}/{n^{\alpha-2\delta-\gamma}}$ or $t_{1}(S)\leq 1-1/n^{\gamma-\tau}$, which implies that $\psi(S)$ maximizes when
	$t_{1}(S)\geq 1-1/n^{\gamma-\tau} \ \ and \ \  \hat{\pi}_{1}^{d}-{\hat{\pi}_{1}^{d}\log n}/{n^{\alpha-2\delta-\gamma}}\leq r^{d}(S)\leq\hat{\pi}_{1}^{d}+{\hat{\pi}_{1}^{d}}/{(n^{\gamma-\tau}-1)}$
	with probability at least $1-2Kn^{-2}-2^{n+2}/n^n$. Since $t_1(S)=r^d_1(S)/r^d(S)$, we therefore have $\hat{\pi}_{1}^{d}-{\hat{\pi}_{1}^{d}\log n}/{n^{\alpha-2\delta-\gamma}}\leq {r_{1}^{d}(S)}/{t_{1}(S)}\leq\hat{\pi}_{1}^{d}+{\hat{\pi}_{1}^{d}}/{(n^{\gamma-\tau}-1)}$, and hence $(1-1/n^{\gamma-\tau})\hat{\pi}_{1}\left(1-{\log n}/{n^{\alpha-2\delta-\gamma}}\right)\leq r_{1}(S)\leq\hat{\pi}_{1}t_1(S)n^{\gamma-\tau}/{(n^{\gamma-\tau}-1)}$. From $t_{1}(S)\leq 1-1/n^{\gamma-\tau}$, we get $h_1h_M^{-1}(r(S)-r_1(S))/r(S)\leq(r^d(S)-r_1^d(S))/r^d(S)\leq 1/n^{\gamma-\tau}$ and $r_1(S)/r(S)\geq 1-h_{M}h_{1}^{-1}/n^{\gamma-\tau}$. Note that $r_1(S)/\hat{\pi}_1=|S\bigcap G_1|/|G_1|$ and $r_1(S)/r(S)=|S\bigcap G_1|/|S|$. Therefore, with probability at least $1-2Kn^{-2}-2^{n+2}/n^n$, we have
	\begin{equation}
	\frac{\left|S\Delta G_{1}\right|}{\left|S\bigcup G_{1}\right|}\leq 2h_{M}h_{1}^{-1}/n^{\gamma-\tau}+\log n/n^{\alpha-2\delta-\gamma}.
	\end{equation}
\end{proof}

The proof of Theorem \ref{DCSBM_outlier} is very similar to the proof of Theorem \ref{DCSBM} and we omit it. The only difference is that we have to pay attention to the ourlier community. For example, in the proof of Lemma \ref{cd}, the inequality (\ref{WS_ExpectationIneq}) becomes $\mathbb{E}(W(S)|\mathbf{c})\gtrsim n^{1+2\alpha}\log n \sum_{k=1}^{K-1}(r_{k}^{d}(S))^{2}$. Then, using the condition  $|G_K|^2/K=o(n^{2-2\delta})$, we can also get $\mathbb{E}(W(S)|\mathbf{c}) \gtrsim n^{1+2(\alpha-\delta)}\log n$ and hence the conclusion of Lemma \ref{cd} also holds for DCSBM with outliers.


\end{document}